\newtheorem{proposition}{Proposition}
\newtheorem*{proposition*}{Proposition}
\newtheorem{definition}{Definition}
\newtheorem{theorem}{Theorem}
\newtheorem*{theorem*}{Theorem}
\newtheorem*{corollary*}{Corollary}
\newtheorem{lemma}{Lemma}
\def\Tr{\mathrm{Tr}}
\begin{document}
\title{Negativity of quasiprobability distributions as a measure of nonclassicality}
\author{Kok Chuan Tan}
\email{bbtankc@gmail.com}
\author{Seongjeon Choi}
\author{Hyunseok Jeong}
\email{jeongh@snu.ac.kr}
\affiliation{Center for Macroscopic Quantum Control \& Institute of Applied Physics, Department of Physics and Astronomy, Seoul National University, Seoul, 08826, Korea}

\begin{abstract}
We demonstrate that the negative volume of any $s$-paramatrized quasiprobability, including the Glauber-Sudashan $P$-function, can be consistently defined and forms a continuous hierarchy of nonclassicality measures that are linear optical monotones. These measures therefore belong to an operational resource theory of nonclassicality based on linear optical operations. The negativity of the Glauber-Sudashan $P$-function in particular can be shown to have an operational interpretation as the robustness of nonclassicality. We then introduce an approximate linear optical monotone, and show that this nonclassicality quantifier is computable and is able to identify the nonclassicality of nearly all nonclassical states.
\end{abstract}

\maketitle

\section{Introduction}
It is typically considered that the most classical quantum states of a light field, or more generally, bosonic fields, are the coherent states\cite{Glauber1963}. Defined as the eigenstates of the annihilation operator, $a\ket{\alpha}=\alpha\ket{\alpha}$, the dynamics of coherent states in a quadratic potential closely resemble that of a classical harmonic oscillator\cite{Schleich2001}. The seminal work of Glauber\cite{Glauber1963} and Sudarshan\cite{Sudarshan1963} showed that every quantum state of light may be written in the form $$\rho = \int d^2 \alpha P(\alpha) \ket{\alpha}\bra{\alpha}$$ where the coefficient $P(\alpha)$ is referred to as the Glauber-Sudarshan $P$-function. When $P(\alpha)$ corresponds to a proper probability density function, the quantum state may be considered a statistical mixture of coherent states and is hence classical. More generally, $P(\alpha)$ is a quasiprobability distribution that may not correspond to any classical probability density. In such cases, the state is considered nonclassical. It is a well known fact that the only classical pure states are the coherent states\cite{Hillery1985}.

Nonclassical states find useful applications in a wide range of tasks, such as quantum metrology\cite{Caves1981}, quantum teleportation\cite{Furusawa1998}, quantum cryptography\cite{Hillery2000}, quantum communication\cite{Braunstein2005} and quantum information processing\cite{Bartlett2003}. Correspondingly, there has been great interest in the characterization, verification and quantification of nonclassicality in light. As the $P$-function function is frequently highly singular, involving terms such as the $n$th order derivatives of delta functions\cite{Agarwal2012}, it is neither theoretically nor experimentally accessible in many instances. As such, previous efforts have largely focused on finding methods to quantify nonclassicality via other means. The Mandel Q parameter\cite{Mandel1979} for instance, measures the deviation from Poissonian statistics. The entanglement potential quantifies the maximum amount of entanglement that can be generated from a beam splitter\cite{Asboth2005}. The nonclassicality depth quantifies the amount of interaction with a thermal state in order to erase nonclassicality\cite{Lee1991, Kuhn2018}. One may also count the number of superpositions of coherent states\cite{Gehrke2012}, the amount of coherent superposition between coherent states\cite{Tan2017}, the sensitivity of a quantum state to operator ordering\cite{Bievre2019}, various geometric distances from the closest classical state\cite{Bievre2019,Hillery1987, Dodonov2000, Marian2002}, the negativity of the Wigner function\cite{Kenfack2004}, or the amount of metrological advantage\cite{Kwon2019, Yadin2018}. However, these nonclassicality measures are frequently computationally intractable except in special cases, unable to detect every nonclassical state, or lack a physical interpretation. 

In this article, we propose a method to directly quantify the negativity of the $P$-function in a consistent way. It is based on the nonclassicality filtering approach proposed in Refs.\cite{Kiesel2010}. We show that this approach leads to a nonclassicality measure that will always decrease under linear optical operations, otherwise called a linear optical monotone. It is therefore a nonclassicality measure under the operational resource theory of nonclassicality proposed in Ref.\cite{Tan2017}. The measure also has a direct physical interpretation as the robustness of nonclassicality; it is the minimum amount of statistical mixing with classical noise that is needed to erase the nonclassicality of the state. We also demonstrate that the negativity of every $s$-parametrized quasiprobability\cite{Cahill1969} is not only a lowerbound to the negativity of the $P$-function, they are also themselves linear optical monotones. The set of $s$-parametrized quasiprobabilities therefore form a continuous hierarchy of nonclassicality measures. Finally, we propose an approximate nonclassicality monotone that is numerically computable for an arbitrary quantum state.

\section{Preliminaries}
We first introduce the characteristic function of the Glauber-Sudarshan $P$ function. A common convention is to define it as the integral $\int d^2\alpha P(\alpha) \exp[2i(\beta_i\alpha_r - \beta_r\alpha_i)] $, where $\alpha_r,\beta_{r}$ and $\alpha_i,\beta_i$ are the real and imaginary components of $\alpha$ and $\beta$ respectively. One may observe that this just a multivariate Fourier transformation. For our purposes, we will adopt the following convention: $$\chi(\beta) \coloneqq \int d^2\alpha P(\alpha) \exp[2\pi i(\beta_i\alpha_r + \beta_r\alpha_i)].$$ 

It should be clear that this definition essentially corresponds to a change in variables of the type $\beta_i \rightarrow \pi \beta_i'$ and $\beta_r \rightarrow -\pi \beta_r'$, and so does not alter the information content of the characteristic function. It also adheres more closely to the conventional definition of the Fourier transform in the ordinary frequency domain: $\mathcal{F}f(y) \coloneqq \int dx f(x)\exp(-2\pi i xy)$. The corresponding inverse Fourier transform is then $\mathcal{F}^{-1}f(y) \coloneqq \int dx f(x)\exp(2\pi i xy)$. This definition allows us to write $P(\alpha) = \mathcal{F}\chi(\alpha)$. All physical characteristic functions satisfies $\abs{\chi(\beta)} \leq \exp(\pi^2\abs{\beta}^2/2)$.

One major issue with the $P$-function is that it is frequently highly singular. This complicates our ability to analyze and quantify the nonclassicality of a quantum state via the $P$-function alone, and necessitates the use of other nonclassicality criteria. 

We consider the filtered $P$-functions proposed in Ref.~\cite{Kiesel2010}. Filtered $P$-functions are based on the observation that $P(\alpha)$ is the (multivariate) Fourier transform of the characteristic function $\chi(\alpha)$, such that $P(\alpha)= \mathcal{F}\chi(\alpha)$. This opens up the possibility of applying a filtering function $\Omega_w(\alpha)$ prior to the Fourier transform. The filtered function is then $$P_{\Omega,w}(\alpha) \coloneqq \mathcal{F}\chi_{\Omega,w}(\alpha)$$ where $\chi_{\Omega,w}(\beta) \coloneqq \chi(\beta)\Omega_w(\beta)$. In general, characteristic, $P$ and filtered $P$-functions depend on the state $\rho$. When the state $\rho$ is unambiguous, the characteristic function is denoted $\chi$ and $\chi(\alpha)$ is the function at the point $\alpha$. When $\rho$ needs to be specified, the characteristic function is denoted $\chi(\rho)$, while $\chi(\alpha \mid \rho)$ is the function at $\alpha$. Similar notations will also be used for the original and filtered $P$-functions.

The filter $\Omega_w$ must be carefully chosen. For our purpose, we require that they satisfy the following properties:

\begin{enumerate}[(a)]

\item $\Omega_w(\beta)$ is factorizable such that $\Omega_w(\beta) = \Omega^1_w(\beta) \Omega^2_w(\beta)$ s.t. $\Omega^i_w(\beta)$ is square integrable for $i=1,2$.
\item $\Omega^1_w(\beta)e^{\pi^2\abs{\beta}^2/2}$ is square integrable.
\item $\Omega_w(0)= 1$ and  $\lim_{w\rightarrow \infty}\Omega_w(\beta) =1$.
\item There exists $t>0$ s.t. $\Omega_{w}(\beta) = \Omega_{w/\abs{r}}(\beta)\Omega_{t}(\beta) $ for any $\abs{r}<1$, and some $t>0$.
\item $\Omega_{w}(\beta) = \Omega_{kw}(k\beta)$ for any $k>0$.

\end{enumerate}

Note that these conditions are stronger than those proposed in Ref.~\cite{Kiesel2010}. There, the key requirement is for $\Omega_w(\beta)e^{\pi^2\abs{\beta}^2/2}$ be square integrable, in order to ensure that its Fourier transform will also be square integrable due to Plancherel's theorem. Square integrability is however not sufficient to ensure that $P_{\Omega,w}(\alpha)$ is pointwise finite for every $\alpha$. Our modified approach closes this gap by ensuring that  $P_{\Omega,w}(\alpha)$ is always finite, which allows us to numerically determine whether there is negativity at a given point $\alpha$.

\begin{theorem} \label{thm::finite}
If $\Omega_w$ satisfies properties (a) and (b), then $P_{\Omega,w}(\alpha)$ contains no singularities and is finite for every $\alpha$.
\end{theorem}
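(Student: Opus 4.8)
The plan is to show that $P_{\Omega,w}(\alpha) = \mathcal{F}\chi_{\Omega,w}(\alpha) = \int d^2\beta\, \chi(\beta)\Omega_w(\beta)\exp[-2\pi i(\beta_i\alpha_r + \beta_r\alpha_i)]$ is an absolutely convergent integral for every fixed $\alpha$, since an absolutely convergent Fourier integral defines a bounded, continuous (hence singularity-free) function. So the whole task reduces to proving $\chi(\beta)\Omega_w(\beta) \in L^1(\mathbb{R}^2)$.

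\emph{First}, I would record the pointwise bound on the characteristic function stated in the Preliminaries, $\abs{\chi(\beta)} \leq \exp(\pi^2\abs{\beta}^2/2)$, which holds for every physical state. \emph{Next}, using factorizability from property (a), write $\chi(\beta)\Omega_w(\beta) = \bigl[\chi(\beta)\Omega^1_w(\beta)\bigr]\cdot\Omega^2_w(\beta)$ and apply the Cauchy--Schwarz inequality:
\begin{equation*}
\int d^2\beta\, \abs{\chi(\beta)\Omega_w(\beta)} \leq \left(\int d^2\beta\, \abs{\chi(\beta)}^2\abs{\Omega^1_w(\beta)}^2\right)^{1/2}\left(\int d^2\beta\, \abs{\Omega^2_w(\beta)}^2\right)^{1/2}.
\end{equation*}
The second factor is finite because $\Omega^2_w$ is square integrable by property (a). For the first factor, insert the bound on $\abs{\chi}$ to get $\int d^2\beta\, \abs{\chi(\beta)}^2\abs{\Omega^1_w(\beta)}^2 \leq \int d^2\beta\, \abs{\Omega^1_w(\beta)}^2 e^{\pi^2\abs{\beta}^2}= \int d^2\beta\, \bigl\lvert \Omega^1_w(\beta)e^{\pi^2\abs{\beta}^2/2}\bigr\rvert^2$, which is finite precisely by property (b). Hence $\chi_{\Omega,w} \in L^1$, its Fourier transform is a bounded continuous function, and $P_{\Omega,w}(\alpha)$ is finite at every $\alpha$ with no singular contributions.

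\emph{The main obstacle} I anticipate is not any single inequality but making sure the splitting exploits the two properties in the right combination: property (b) is stated only for the factor $\Omega^1_w$ carrying the Gaussian blow-up, while the mere square integrability of $\Omega^2_w$ from property (a) must absorb the rest. One should check that the $L^2$ bound $\abs{\chi}\le e^{\pi^2\abs{\beta}^2/2}$ is genuinely available for all physical $\chi$ (it is quoted in the Preliminaries) rather than something stronger, and that Cauchy--Schwarz, not Hölder with other exponents, is what pairs with the two square-integrability hypotheses. A minor point to state carefully is that absolute convergence of the Fourier integral is what upgrades "$L^2$ Fourier transform defined a.e." (which is all Plancherel gives, and is the gap noted in the text relative to Ref.~\cite{Kiesel2010}) to a genuinely pointwise-finite, continuous $P_{\Omega,w}$; this is exactly the improvement the theorem is designed to deliver.
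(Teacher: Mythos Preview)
Your argument is correct and uses exactly the same ingredients as the paper: the factorization $\chi\Omega_w=(\chi\Omega^1_w)\Omega^2_w$, the bound $\abs{\chi}\le e^{\pi^2\abs{\beta}^2/2}$, and Cauchy--Schwarz paired with the two square-integrability hypotheses (a) and (b). The only difference is the order of operations: the paper first passes to the $\alpha$-side via the convolution theorem, writes $P_{\Omega,w}=\mathcal{F}[\chi\Omega^1_w]\ast\mathcal{F}\Omega^2_w$, invokes Plancherel to put each factor in $L^2$, and then applies Cauchy--Schwarz to the convolution integral; you instead apply Cauchy--Schwarz directly on the $\beta$-side to show $\chi_{\Omega,w}\in L^1$ and conclude that its Fourier transform is bounded and continuous. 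Your route is marginally more direct, since it bypasses both Plancherel and the convolution theorem, but the two arguments are really the same estimate viewed before versus after the Fourier transform.
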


\begin{proof}
Since $\chi_{\Omega,w}(\beta) \equiv \chi(\beta)\Omega_w(\beta) = \chi(\beta)\Omega^1_w(\beta) \Omega^2_w(\beta)$, we can group the terms such that $\chi_{\Omega,w}(\beta) = [\chi(\beta)\Omega^1_w(\beta)]\times \Omega^2_w(\beta)$. The convolution theorem then implies that $P_{\Omega,w}(\alpha) \equiv\mathcal{F}\chi_{\Omega,w}(\alpha) = \{\mathcal{F}[\chi(\beta)\Omega^1_w(\beta)]*\mathcal{F}\Omega^2_w(\beta)\}(\alpha)$. 

From property (a), we already know that $\Omega^2_w(\beta)$ and hence $\mathcal{F}\Omega^2_w(\beta)$ are square integrable from Plancherel's theorem. Furthermore, from property (b), we are guaranteed that $\Omega^1_w(\beta)e^{\pi^2\abs{\beta}^2}$ is square integrable. This means that $\Omega^1_w(\beta)\chi(\beta)$ is also square integrable since $\chi(\beta) \leq e^{\pi^2\abs{\beta}^2}$. Applying Plancherel's theorem again, we know that $\mathcal{F}[\chi(\beta)\Omega^1_w(\beta)]$ is also square integrable.

We recall that if $f(\beta)$ and $g(\beta)$ are both square integrable, then by Cauchy's inequality, it must satisfy $\lVert f(\beta) g(\beta) \rVert_1 \leq \lVert f(\beta) \rVert_2 \lVert g(\beta) \rVert_2$ where $\lVert \cdot \rVert_1$ and $\lVert \cdot \rVert_2$ are the $L_1$ and $L_2$ norms respectively.  Furthermore, since the $L_1$ norm is just the absolute integral, we have $\abs{\int d^2\beta f(\beta)g(\beta)} \leq \lVert f(\beta) \rVert_2 \lVert g(\beta) \rVert_2 < \infty$. This implies that the integral $\int d^2\beta f(\beta)g(\beta)$ is finite. 
 
$\{\mathcal{F}[\chi(\beta)\Omega^1_w(\beta)]*\mathcal{F}\Omega^2_w(\beta)\} (\alpha)$ is a convolution of two square integrable functions. By the definition of a convolution, for every given $\alpha$, it is an integral of a product of 2 square integrable functions. From the property described in the previous paragraph, we must have $P_{\Omega,w}(\alpha) \equiv \mathcal{F}\chi_{\Omega,w}(\alpha) = \{\mathcal{F}[\chi(\beta)\Omega^1_w(\beta)]*\mathcal{F}\Omega^2_w(\beta)\}(\alpha) < \infty$, so it has a finite value for every $\alpha$. This means that the filtered function $P_{\Omega,w}(\alpha)$ is finite everywhere and contains no singularities.
\end{proof}

Theorem~\ref{thm::finite} thus allows us to to assign definite positive or negative values to every point $\alpha$ of $P_{\Omega,w}(\alpha)$. This implies that we can determine unambiguously the positive and negative regions of $P_{\Omega,w}(\alpha)$. As such, for every $w$ the negative volume of $P_{\Omega,w}(\alpha)$ is well defined. Property (c) then guarantees that the filtered function is a proper quasiprobability function such that $\int d^2\alpha P_{\Omega,w}(\alpha) = 1$, and that for sufficiently large $w$, $\mathcal{F}\Omega_w(\alpha) \approx \delta(\alpha)$, so the original $P$-function is retrieved. This allows us to define the negativity of a $P$-function.

\begin{definition} [Negativity of a $P$ function] \label{def::NegP}

Let $f(\alpha)$ be a function that is well defined for every $\alpha$, so that we can write $f(\alpha)= f^+(\alpha)-f^-(\alpha)$, where $f^\pm(\alpha)$ are pointwise nonnegative functions. Then the negativity of $f$ is defined as $$\mathcal{N}(f) \coloneqq \int d^2\alpha f^-(\alpha).$$

Consider the $P$ function of a state $\rho$. Let $\Omega_w$ be some filter that satisfies properties (a)-(c). We can then write the filtered $P$-function as $P_{\Omega,w}(\alpha)= P^+_{\Omega,w}(\alpha)- P^-_{\Omega,w}(\alpha)$ where $P^\pm_{\Omega,w}(\alpha)$ are the nonnegative functions.

The negativity, of $\rho$ is defined to be $$\mathcal{N}(\rho) \coloneqq \lim_{w\rightarrow \infty} \int d^2\alpha P^-_{\Omega,w}(\alpha).$$  
\end{definition}

Given the above definition, we still need to find an appropriate filter $\Omega_w$. The astute reader may have noticed that properties (d) and (e) are not yet discussed. They will play an important role which will be described in greater detail in a subsequent section. We will first establish several properties of the negativity.

\section{Negativity as a linear optical monotone}

In Ref.~\cite{Tan2017}, a resource theoretical approach was proposed to quantify nonclassicality in radiation fields. There, it was argued that nonclassicality measures should be linear optical monotones, i.e. a nonclassicality should be measured using quantities that do not increase under linear optical maps. Given this approach, we can consider nonclassicality as potential resources to overcome the limitations of linear optics.

Linear optical maps are formally defined to be any quantum map that can be written in the form
$$\Phi_L(\rho_A) := \mathrm{Tr}_E [ U_L (\rho_A \otimes \sigma_E) U_L^\dag ],$$ 
where $\sigma_E$ is a classical state and $U_L$ is a linear optical unitary composed of any combination of beam splitters, phase shifters and displacement operations. Such unitary transforms will always map a $N$ mode bosonic creation operator $a_{\vec{\mu}}^\dag \coloneqq \sum_{i=1}^N \mu_i a_i^\dag$ into $a_{\vec{\mu'}}^\dag + \oplus_{i-1}^N \alpha_i\openone_ni$ where $\vec{\mu},\vec{\mu}'$ are $N$ dimensional complex vectors of unit length, and $\openone_i$ is the identity operator on the $i$th mode.

One may also incorporate postselection into the definition by defining selective linear optical operations via a set of Kraus operators $K_i$ for which there exists linear optical unitary $U_L$, classical ancilla $\sigma_{EE'}$, and a set of orthogonal vectors $\{\ket{i}_{E'} \}$ such that $\mathrm{Tr}_E [ U_L (\rho_A \otimes \sigma_{EE'}) U_L^\dag ]= \sum_i p_i \rho^i_{A} \otimes \ket{i}_{E'}\bra{i}$, where $p_i \rho^i_{A} \coloneqq {K}_i \rho_A K^\dag_i $ and $p_i \coloneqq \mathrm{Tr}(K_i \rho_A K^\dag_i)$. 

Based on this definition of linear optical maps, the following theorem shows that the negativity $\mathcal{N}$ is a linear optical monotone and therefore belongs to the operational resource theory outlined in Ref.~\cite{Tan2017}.

\begin{theorem}
\label{thm::monotonicity}
The negativity ${\cal N}(\rho)$ is a faithful nonclassicality measure satisfying the following properties:
\begin{enumerate}
\item $\mathcal{N}(\rho) = 0$ iff  $\rho$ has a classical $P$-function.

\item
	\begin{enumerate}
	
	\item  (Weak monotonicity) $\mathcal{N}(\rho)\geq \mathcal{N}(\Phi_L (\rho))$.
	
	\item (Strong monotonicity) $\mathcal{N}(\rho) \geq \sum_i p_i\mathcal{N}(\rho_i)$ where $p_i \coloneqq \mathrm{Tr}(K^\dag_iK_i \rho )$ and $\rho_i \coloneqq (K_i \rho K^\dag_i)/ p_i$.
	
	\end{enumerate}

\item (Convexity), i.e. $\mathcal{N}(\sum_i p_i \rho_i) \leq \sum_i p_i \mathcal{N}( \rho_i)$.

\end{enumerate}

\end{theorem}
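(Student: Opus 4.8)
The plan is to prove all four statements at once by first identifying $\mathcal{N}(\rho)$ with the robustness of nonclassicality and then invoking the standard behaviour of such a robustness measure. Throughout I use Theorem~\ref{thm::finite}, so that for each $w$ the function $P_{\Omega,w}(\rho)$ is genuine and admits a Hahn decomposition $P_{\Omega,w}=P^+_{\Omega,w}-P^-_{\Omega,w}$ into pointwise non-negative parts, together with the fact -- standard for the admissible filters of Ref.~\cite{Kiesel2010} -- that $\mathcal{F}\Omega_w$ is a non-negative function of unit integral, so that $P_{\Omega,w}(\tau)=P_\tau*\mathcal{F}\Omega_w\ge0$ for every classical $\tau$.

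First I would establish $\mathcal{N}(\rho)=R(\rho)$, where $R(\rho):=\inf\{\lambda\ge0 : \rho=(1+\lambda)\sigma_+-\lambda\sigma_-,\ \sigma_\pm\ \text{classical states}\}$. The inequality $\mathcal{N}\le R$ is the easy direction: given such a decomposition, linearity of the filtering map gives $P_{\Omega,w}(\rho)=(1+\lambda)P_{\Omega,w}(\sigma_+)-\lambda P_{\Omega,w}(\sigma_-)$ with $P_{\Omega,w}(\sigma_\pm)\ge0$, hence $P^-_{\Omega,w}(\rho)\le\lambda\,P_{\Omega,w}(\sigma_-)$ pointwise and $\int d^2\alpha\,P^-_{\Omega,w}(\rho)\le\lambda$ for every $w$, so $\mathcal{N}(\rho)\le\lambda$ and thus $\mathcal{N}(\rho)\le R(\rho)$. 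For $R\le\mathcal{N}$, at each finite $w$ the Hahn decomposition exhibits $P_{\Omega,w}(\rho)$ as $(1+\lambda_w)P_{\sigma_+^{(w)}}-\lambda_w P_{\sigma_-^{(w)}}$ with $\lambda_w=\int d^2\alpha\,P^-_{\Omega,w}(\rho)$ and $\sigma_\pm^{(w)}$ classical states; sending $w\to\infty$ (property (c) gives $P_{\Omega,w}(\rho)\to P_\rho$) and using compactness of the set of classical states to pass to convergent subsequences $\sigma_\pm^{(w)}\to\sigma_\pm$ yields $\rho=(1+\mathcal{N}(\rho))\sigma_+-\mathcal{N}(\rho)\sigma_-$, so $R(\rho)\le\mathcal{N}(\rho)$. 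This identification is moreover the operational content of $\mathcal{N}$ as a robustness.

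With $\mathcal{N}=R$ the four claims reduce to properties of a robustness relative to the convex set $\mathcal{C}$ of classical states, and the only problem-specific input is that $\mathcal{C}$ is convex and closed under selective linear-optical operations. Faithfulness is then immediate: $\mathcal{N}(\rho)=R(\rho)=0$ iff $\rho\in\mathcal{C}$, i.e.\ iff $\rho$ has a classical (non-negative) $P$-function. Convexity: from $\rho_j=(1+\lambda_j)\sigma_+^{(j)}-\lambda_j\sigma_-^{(j)}$ one gets $\sum_j p_j\rho_j=(1+\Lambda)\sigma_+-\Lambda\sigma_-$ with $\Lambda=\sum_j p_j\lambda_j$ and $\sigma_\pm$ suitable convex combinations of the $\sigma_\pm^{(j)}$ (hence in $\mathcal{C}$), so $R(\sum_j p_j\rho_j)\le\sum_j p_j\lambda_j$; infimise. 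Strong monotonicity: push an optimal decomposition $\rho=(1+\lambda)\sigma_+-\lambda\sigma_-$ through $K_i(\cdot)K_i^\dagger$, write $K_i\sigma_\pm K_i^\dagger=q_\pm^{(i)}\tau_\pm^{(i)}$ with $\tau_\pm^{(i)}\in\mathcal{C}$ and $\sum_i q_\pm^{(i)}=\Tr\sigma_\pm=1$ (using $\sum_i K_i^\dagger K_i=\openone$), renormalise each branch to obtain $R(\rho_i)\le\lambda\,q_-^{(i)}/p_i$, and sum: $\sum_i p_i R(\rho_i)\le\lambda\sum_i q_-^{(i)}=\lambda$, hence $\sum_i p_i\mathcal{N}(\rho_i)\le\mathcal{N}(\rho)$. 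Weak monotonicity is then the special case: writing $\Phi_L(\rho)=\sum_i K_i\rho K_i^\dagger=\sum_i p_i\rho_i$ in linear-optical Kraus form, convexity gives $\mathcal{N}(\Phi_L(\rho))\le\sum_i p_i\mathcal{N}(\rho_i)\le\mathcal{N}(\rho)$.

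The main obstacle is two-fold. The first is the closure claim: that $K_i\sigma K_i^\dagger$ is a subnormalised classical operator whenever $\sigma\in\mathcal{C}$. This requires unwinding the definition of a selective linear-optical operation -- $\sigma=\int P(\vec\alpha)\bigotimes_k\pjct{\alpha_k}{\alpha_k}\,d^2\vec\alpha$ with $P\ge0$, a linear-optical unitary sends each $\bigotimes_k\pjct{\alpha_k}{\alpha_k}$ to another product of coherent states, and tracing out $E$ while projecting $E'$ onto $\ket{i}$ merely reweights $P$ by the non-negative factor $\lvert\langle i|\alpha_{E'}\rangle\rvert^2$ -- so that the output still has a non-negative $P$-function; one must check that displacements, phase shifters and beam splitters indeed preserve coherent-state products. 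The second is the $w\to\infty$ limit in $\mathcal{N}=R$: one must argue that the classical endpoints $\sigma_\pm^{(w)}$ can be taken to converge (a tightness/compactness argument for the relevant family of classical states) and that the decomposition survives the limit. An alternative that bypasses the identification is to prove monotonicity directly on the filtered $P$-functions at fixed $w$, using properties (d) and (e) to fold the phase-space rescaling caused by attenuation into a shift of the scale $w$ and to obtain monotonicity of $\int d^2\alpha\,P^-_{\Omega,w}$ in $w$; this is more explicit but runs into the fact that a product-over-modes filter is not invariant under passive mode-mixing, forcing a jointly radial choice of filter.
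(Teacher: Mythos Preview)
Your route is genuinely different from the paper's. The paper does \emph{not} reduce Theorem~\ref{thm::monotonicity} to the robustness; it establishes monotonicity first by quoting Ref.~\cite{Tan2017} (for states with regular $P$-functions the negative volume is already a weak/strong linear-optical monotone) and then extends to singular $P$-functions by the simple limit $\rho_w\coloneqq\int d^2\alpha\,P_{\Omega,w}(\alpha)\,|\alpha\rangle\langle\alpha|\to\rho$. The identification $\mathcal N=\mathcal R$ is proved only afterwards, as a separate Theorem~\ref{thm::robustness}. Your inversion of this order is conceptually clean and, once $\mathcal N=R$ is in hand, the derivation of faithfulness, convexity and strong monotonicity from generic robustness arguments is a nice shortcut that makes the role of the free set explicit. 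The paper's route is shorter on paper because it outsources the closure of classical states under selective linear-optical Kraus maps to Ref.~\cite{Tan2017}, whereas you would have to prove that closure yourself (you correctly flag this as the first obstacle).

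There is, however, a genuine gap in your $\mathcal N\le R$ step. You invoke that $\mathcal F\Omega_w\ge0$ so that $P_{\Omega,w}(\sigma_\pm)\ge0$ for classical $\sigma_\pm$. This is \emph{not} implied by the paper's conditions (a)--(c): those conditions force $\Omega_w$ to decay strictly faster than a Gaussian (via (b)), and the paper's own concrete filter $\Omega_{w,\epsilon}$ in Proposition~\ref{prop::filt} has $\mathcal N(\mathcal F\Omega_{w=1,\epsilon})=\delta>0$. So either you must exhibit a filter satisfying (a)--(c) whose Fourier transform is pointwise non-negative (an autocorrelation filter in the spirit of Ref.~\cite{Kiesel2010} with a super-Gaussian envelope is the natural candidate, but you would need to check (a)--(c) hold), or you must replace the pointwise inequality $P_{\Omega,w}(\sigma_\pm)\ge0$ by a limit argument in $w$, which then collides with the second obstacle you already identified: the set of classical states is closed but not compact, so the convergence of $\sigma_\pm^{(w)}$ needs a genuine tightness argument rather than compactness. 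The paper sidesteps both issues by never needing $\mathcal F\Omega_w\ge0$ for Theorem~\ref{thm::monotonicity} and by pushing the limit through the already-established monotonicity inequality rather than through a state-level decomposition.
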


\begin{proof}
It is apparent that if the $P$-function of $\rho$ is classical, then $\mathcal{N}(\rho)$ = 0 since $P_{\Omega,w}(\alpha) \rightarrow P(\alpha)$ as $w \rightarrow \infty$ so the negative volume must vanish. The converse must also be true as if $\mathcal{N}(\rho) = 0$, then $\int d^2\alpha P^-_{\Omega,w}(\alpha) \rightarrow 0$ as $w \rightarrow \infty$, which implies $P^+_{\Omega,w}(\alpha) \rightarrow P(\alpha)$. This means that  $P(\alpha)$ is the limit of a sequence of positive distributions. As the set of classical states is a closed convex set, and $P^+_{\Omega,w}(\alpha) \rightarrow P(\alpha)$, this means that $P(\alpha)$ must be classical. This proves Property 1.

In order to prove the weak and strong monotonicity properties, we make use of an observation from Ref~\cite{Tan2017}. It was noted that for the special case when the $P$ is any regular function that does not contain any singularities, the negative volume $\mathcal{N}(\rho)$ where $\rho =  \int d^2\alpha P(\alpha) |\alpha\rangle \langle \alpha|$ satisfies both weak and strong monotonicity conditions.

We now extend the above result to all $P$ functions. Let $\rho_{w} = \int d^2\alpha P_{\Omega,w}(\alpha) |\alpha\rangle\langle\alpha|$. For weak monotonicity, we see that $\mathcal{N}(\rho_w)\geq \mathcal{N}(\Phi_L (\rho_w))$. Taking the limit $w \rightarrow \infty$, $\rho_w \rightarrow \rho$, the inequality converges to $\mathcal{N}(\rho)\geq \mathcal{N}(\Phi_L (\rho))$. Identical arguments hold for strong monotonicity. This is sufficient to generalize the monotonicity property to all $P$ functions, and establishes Property 2. 

Similarly for convexity, we have $\mathcal{N}(\sum_i p_i \rho_{w,i}) \leq \sum_i p_i \mathcal{N}( \rho_{w,i})$ for every $w$. Taking the limit $w \rightarrow \infty$, $\rho_{w,i} \rightarrow \rho_i$ so the inequality converges to $\mathcal{N}(\sum_i p_i \rho_i) \leq \sum_i p_i \mathcal{N}( \rho_i)$ which is the required inequality.


\end{proof}

\section{Equivalence between negativity and robustness}

An operational measure that has been extensively studied in various quantum resource theories is the robustness\cite{Vidal1999, Napoli2016}. It quantifies the minimum amount of mixing with noise that is necessary to make a given quantum state classical. It turns out that the negativity exactly quantifies the robustness of a given quantum state.

We can consider the following definition for the robustness of nonclassicality.

\begin{definition}[Robustness of nonclassicality] 
Let $\mathcal{P}$ be the set of all quantum states with classical $P$ distributions.

The robustness of nonclassicality is defined as $$\mathcal{R}(\rho) \coloneqq \min_{\sigma \in \mathcal{P}}\{r \mid r\geq 0, \frac{\rho+r \sigma}{1+r} \in \mathcal{P} \}.$$

\end{definition} 

Based on the above definition, one may show that the negativity and the robustness are in fact equivalent.

\begin{theorem} \label{thm::robustness}

The negativity and the robustness are equivalent measures of nonclassicality, i.e. $\mathcal{N}(\rho) = \mathcal{R}(\rho)$ for every quantum state $\rho$.

\end{theorem}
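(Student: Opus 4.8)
The plan is to establish the two inequalities $\mathcal{N}(\rho)\le\mathcal{R}(\rho)$ and $\mathcal{R}(\rho)\le\mathcal{N}(\rho)$ separately, using the filtered $P$-function as the bridge. The common preliminary step is to rewrite the robustness as a minimal-weight pseudomixture of classical states: taking the trace of $\frac{\rho+r\sigma}{1+r}=\tau$ shows $\tau$ is automatically a state, and conversely any $\rho=(1+r)\tau-r\sigma$ with $\tau,\sigma\in\mathcal{P}$ is an admissible witness of robustness $\le r$, so $\mathcal{R}(\rho)=\min\{r\ge 0 : \rho=(1+r)\tau-r\sigma,\ \tau,\sigma\in\mathcal{P}\}$. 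Because the characteristic function---and hence, via the filter, every $P_{\Omega,w}(\alpha\mid\cdot)$---is linear in the state, such a decomposition transfers to $P_{\Omega,w}(\alpha\mid\rho)=(1+r)P_{\Omega,w}(\alpha\mid\tau)-rP_{\Omega,w}(\alpha\mid\sigma)$, and by faithfulness (Property 1 of Theorem~\ref{thm::monotonicity}) the filtered negative masses $\mathcal{N}_w(\tau):=\int d^2\alpha\,P^-_{\Omega,w}(\alpha\mid\tau)$ and $\mathcal{N}_w(\sigma)$ of the classical states tend to $0$ as $w\to\infty$.

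For $\mathcal{N}(\rho)\le\mathcal{R}(\rho)$, take an optimal decomposition with $r=\mathcal{R}(\rho)$. Using $-P_{\Omega,w}(\alpha\mid\tau)\le P^-_{\Omega,w}(\alpha\mid\tau)$ together with $\max\{0,a+b\}\le\max\{0,a\}+b$ for $b\ge 0$, and noting that Theorem~\ref{thm::finite} makes these pointwise manipulations legitimate, one obtains $P^-_{\Omega,w}(\alpha\mid\rho)=\max\{0,\,rP_{\Omega,w}(\alpha\mid\sigma)-(1+r)P_{\Omega,w}(\alpha\mid\tau)\}\le rP^+_{\Omega,w}(\alpha\mid\sigma)+(1+r)P^-_{\Omega,w}(\alpha\mid\tau)$. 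Integrating and using $\int d^2\alpha\,P^+_{\Omega,w}(\alpha\mid\sigma)=1+\mathcal{N}_w(\sigma)$ yields $\mathcal{N}_w(\rho)\le r\big(1+\mathcal{N}_w(\sigma)\big)+(1+r)\mathcal{N}_w(\tau)$; letting $w\to\infty$ sends the $\sigma$- and $\tau$-contributions to zero and leaves $\mathcal{N}(\rho)\le r=\mathcal{R}(\rho)$.

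For $\mathcal{R}(\rho)\le\mathcal{N}(\rho)$, I would argue at finite $w$ and then take a limit. Normalizing the nonnegative parts in $P_{\Omega,w}(\alpha\mid\rho)=P^+_{\Omega,w}(\alpha\mid\rho)-P^-_{\Omega,w}(\alpha\mid\rho)$ to probability densities produces classical states $\tau_w,\sigma_w\in\mathcal{P}$ with $\rho_w:=\int d^2\alpha\,P_{\Omega,w}(\alpha\mid\rho)\ket{\alpha}\bra{\alpha}=(1+\mathcal{N}_w)\tau_w-\mathcal{N}_w\sigma_w$, where $\mathcal{N}_w:=\mathcal{N}_w(\rho)$. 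Hence $\frac{\rho_w+\mathcal{N}_w\sigma_w}{1+\mathcal{N}_w}=\tau_w\in\mathcal{P}$, i.e.\ $\mathcal{R}(\rho_w)\le\mathcal{N}_w$. By Property (c) of the filter $\rho_w\to\rho$, and by Definition~\ref{def::NegP} $\mathcal{N}_w\to\mathcal{N}(\rho)$, so the inequality follows once one knows $\mathcal{R}(\rho)\le\liminf_w\mathcal{R}(\rho_w)$.

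I expect this last step---lower semicontinuity of $\mathcal{R}$ along $\rho_w\to\rho$---to be the main obstacle, and it is where the closedness and convexity of $\mathcal{P}$ (already invoked in Theorem~\ref{thm::monotonicity}) enter. The natural route is to take optimal noise states $\sigma_w$ realizing $\mathcal{R}(\rho_w)$, extract a convergent subsequence, and pass $\rho_w=(1+r_w)\tau_w-r_w\sigma_w$ to the limit using that $\mathcal{P}$ is closed---but in infinite dimensions one must rule out ``escape of mass to infinity,'' e.g.\ by restricting the admissible noise to an energy ball (which is compact) or by a weak-$*$ compactness argument with a check that normalization survives the limit. Once this is in place, the two inequalities combine to give $\mathcal{N}(\rho)=\mathcal{R}(\rho)$ (with both sides allowed to be $+\infty$). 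As a sanity check, when the bare $P$-function is already a regular function the argument short-circuits: the decomposition $P=P^+-P^-$ is itself the optimal pseudomixture, and uniqueness of the $P$-representation identifies $\mathcal{R}(\rho)$ with $\int d^2\alpha\,P^-(\alpha)=\mathcal{N}(\rho)$ with no limit needed.
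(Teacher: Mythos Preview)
Your proof shares the paper's overall architecture: both directions are argued through the filtered $P$-function and a limit $w\to\infty$. For $\mathcal{R}(\rho)\le\mathcal{N}(\rho)$ you and the paper do exactly the same thing---normalize $P^{\pm}_{\Omega,w}$ into classical states $\tau_w,\sigma_w$, observe that $(\rho_w+r_w\sigma_w)/(1+r_w)=\tau_w\in\mathcal{P}$, and pass to the limit. The paper simply writes $\sigma:=\lim_{w\to\infty}\sigma_w$ and declares the mixture classical, so the lower-semicontinuity/compactness issue you flag is present (and unaddressed) in the paper's argument as well; your explicit discussion of it is an improvement, not a defect. For the reverse inequality $\mathcal{N}(\rho)\le\mathcal{R}(\rho)$ the paper takes a different tack: rather than filtering an optimal robustness witness as you do, it argues that $P^{-}_{\Omega,w}$ is the pointwise minimal nonnegative function whose addition makes $P_{\Omega,w}$ nonnegative, which implicitly relies on uniqueness of the $P$-representation to rule out alternative classical decompositions. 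Your route---bound $P^{-}_{\Omega,w}(\alpha\mid\rho)$ by $rP^{+}_{\Omega,w}(\alpha\mid\sigma)+(1+r)P^{-}_{\Omega,w}(\alpha\mid\tau)$ and use faithfulness to kill the $\sigma$- and $\tau$-error terms in the limit---is cleaner, since it avoids the uniqueness appeal and makes the $w\to\infty$ step transparent.
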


\begin{proof}

First, note that we can always write $P_{\Omega,w}(\alpha)= P^+_{\Omega,w}(\alpha)- P^-_{\Omega,w}(\alpha)$ where $P^\pm_{\Omega,w}(\alpha)$ are pointwise nonnegative functions. Let $\int d^2 \alpha P^-_{\Omega,w}(\alpha) \coloneqq r_w$ and $\lim_{w\rightarrow \infty}r_w = r$. Note that by this definition, $r = \mathcal{N}(\rho)$ 

We now consider some sufficiently large $w$ and observe that $r$ is always an upper bound to the robustness. This is because, $\frac{1}{1+r_w}(P_{\Omega,w}(\alpha)+P^-_{\Omega,w}(\alpha))=\frac{1}{1+r_w}P^+_{\Omega,w}(\alpha)$ which corresponds to a positive, and hence classical, $P$-function. Therefore, if $\rho_w$ and $\sigma_w$ are the quantum states corresponding to the distributions $P_{\Omega,w}(\alpha)$ and $P^-_{\Omega,w}(\alpha)/r_w$ respectively, the mixture $\frac{\rho_w+r_w\sigma_w}{1+r_w}$ always has a classical $P$-function. Taking the limit $w \rightarrow \infty$, we get $\frac{\rho+r\sigma}{1+r}$ is classical, where $\sigma \coloneqq \lim_{w\rightarrow \infty}\sigma_w$. Since $r$ is just the negativity $\mathcal{N}(\rho)$, we see that the negativity is at least an upper bound to the robustness.

We now need to show that $r$ is also a lower bound. This follows immediately from the observation that $P^-_{\Omega,w}(\alpha)$ is the minimal function necessary for $P_{\Omega,w}(\alpha)$ to be positive. It is clear that if $P'(\alpha) < P^-_{\Omega,w}(\alpha)$ for any $\alpha$, then $P_{\Omega,w}(\alpha)+ P'(\alpha) < 0 $ and so is not positive at $\alpha$. This shows that $r$ must also be a lower bound and proves the theorem.

\end{proof}

\section{Relationship with the negativity of other quasiprobabilities}

It is well known that the characteristic function of $P$ is related to the characteristic functions of other commonly studied quasiprobability distributions via the following relation: $$\chi_s(\beta) \coloneqq \chi(\beta)e^{-(1-s)\pi^2\abs{\beta}^2/2}.$$

Note that this differs slightly from the usual convention due to the convention we employ for $\chi(\beta)$. For $s =1$, we retrieve the characteristic function of the $P$-function, for $s = 0$ the characteristic function leads to the Wigner function, while for $s=-1$ the characteristic function is related to the Husimi $Q$ function. These form the set of $s$-parametrized quasiprobability distributions\cite{Cahill1969}. 

We can define the negative volume of the $s$-parametrized quasiprobabilities using a similar approach.

\begin{definition}[$s$-parametrized negativity] \label{def::NegPs}
Let $P_s(\alpha) \coloneqq \mathcal{F}\chi_s(\alpha)$ be some $s$-parametrized quasiprobability, and let $P_{s,\Omega,w}(\alpha) \coloneqq \mathcal{F}\chi_{s,\Omega, w}(\alpha)$ be the filtered $s$-parametrized quasiprobability, where $\chi_{s,\Omega, w}(\beta) \coloneqq  \chi_s(\beta)\Omega_w(\beta)$ for some filter $\Omega_w$ satisfying properties (a)-(c).

We can then write $P_{s,\Omega,w}(\alpha)= P^+_{s,\Omega,w}(\alpha)- P^-_{s,\Omega,w}(\alpha)$ where $P^\pm_{s,\Omega,w}(\alpha)$ are well defined.

The $s$-parametrized negativity is defined as $$\mathcal{N}_s(\rho) \coloneqq \lim_{w \rightarrow \infty}\int d^2 \alpha P^-_{s,\Omega,w}(\alpha).$$
\end{definition}

Given the above definition, we can establish several properties. The following theorem establishes the monotonic dependence of $\mathcal{N}_s$ on $s$.

\begin{theorem} \label{thm::sMonotonicity}
$\mathcal{N}_s(\rho)$ is a monotonically increasing function of $s\leq 1$ and is upper bounded by the negativity of the $P$-function, i.e. $\mathcal{N}_s(\rho) \leq \mathcal{N}(\rho)$.
\end{theorem}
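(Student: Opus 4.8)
The plan is to realise the lower-$s$ filtered quasiprobability as a Gaussian smoothing of the higher-$s$ one, and then to show that convolving with a probability density can only shrink the negative volume. Fix once and for all a filter $\Omega_w$ satisfying properties (a)--(c), and let $s'\leq s\leq 1$ with $\Delta\coloneqq s-s'\geq 0$. From the defining relation $\chi_t(\beta)=\chi(\beta)e^{-(1-t)\pi^2\abs{\beta}^2/2}$ one reads off at once that $\chi_{s',\Omega,w}(\beta)=\chi_{s,\Omega,w}(\beta)\,e^{-\Delta\pi^2\abs{\beta}^2/2}$. Since $\abs{\chi_s(\beta)}\leq e^{\pi^2\abs{\beta}^2/2}$ whenever $s\leq 1$, Theorem~\ref{thm::finite} applies with $\chi$ replaced by $\chi_s$ (respectively $\chi_{s'}$), so that $P_{s,\Omega,w}$ and $P_{s',\Omega,w}$ are finite at every point and the splittings of Definition~\ref{def::NegPs} are well defined. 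Grouping the factors of $\chi_{s',\Omega,w}$ exactly as in the proof of Theorem~\ref{thm::finite} and invoking the convolution theorem, one obtains the key identity
\[
P_{s',\Omega,w}(\alpha)=\bigl(P_{s,\Omega,w}*G_\Delta\bigr)(\alpha),\qquad G_\Delta\coloneqq\mathcal{F}\bigl[e^{-\Delta\pi^2\abs{\beta}^2/2}\bigr],
\]
where $G_\Delta$ is a strictly positive Gaussian with $\int d^2\alpha\,G_\Delta(\alpha)=1$ (a one-line Gaussian integral; for $\Delta=0$ one reads $G_0=\delta$).

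Next I would exploit the minimality of the negative part. Taking the canonical decomposition $P_{s,\Omega,w}=P^+_{s,\Omega,w}-P^-_{s,\Omega,w}$ with $P^\pm_{s,\Omega,w}=\max\{\pm P_{s,\Omega,w},0\}$ and convolving with the nonnegative kernel $G_\Delta$ gives
\[
P_{s',\Omega,w}=\bigl(P^+_{s,\Omega,w}*G_\Delta\bigr)-\bigl(P^-_{s,\Omega,w}*G_\Delta\bigr),
\]
a (generally non-minimal) decomposition of $P_{s',\Omega,w}$ into pointwise-nonnegative pieces. Since any decomposition $f=g-h$ with $g,h\geq 0$ satisfies $h\geq\max\{-f,0\}=f^-$ pointwise, it follows that $P^-_{s',\Omega,w}(\alpha)\leq\bigl(P^-_{s,\Omega,w}*G_\Delta\bigr)(\alpha)$ for every $\alpha$. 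Integrating and using Tonelli's theorem together with the normalisation $\int G_\Delta=1$,
\[
\int d^2\alpha\,P^-_{s',\Omega,w}(\alpha)\;\leq\;\int d^2\alpha\,\bigl(P^-_{s,\Omega,w}*G_\Delta\bigr)(\alpha)\;=\;\int d^2\alpha\,P^-_{s,\Omega,w}(\alpha),
\]
so that the negative volume of the filtered $s'$-quasiprobability never exceeds that of the filtered $s$-quasiprobability, at every fixed $w$.

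It then remains to take $w\to\infty$. Since the inequality $\int P^-_{s',\Omega,w}\leq\int P^-_{s,\Omega,w}$ holds for all $w$ and both limits exist by Definition~\ref{def::NegPs}, it survives the limit and yields $\mathcal{N}_{s'}(\rho)\leq\mathcal{N}_s(\rho)$, i.e. $\mathcal{N}_s(\rho)$ is non-decreasing in $s$ on $s\leq 1$. The bound $\mathcal{N}_s(\rho)\leq\mathcal{N}(\rho)$ is then the endpoint case: setting $s=1$ gives $\chi_1=\chi$, hence $P_{1,\Omega,w}=P_{\Omega,w}$ and $\mathcal{N}_1(\rho)=\mathcal{N}(\rho)$, so monotonicity forces $\mathcal{N}_s(\rho)\leq\mathcal{N}_1(\rho)=\mathcal{N}(\rho)$ for every $s\leq 1$.

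The step I expect to be the main obstacle is making the convolution identity $P_{s',\Omega,w}=P_{s,\Omega,w}*G_\Delta$ fully rigorous: one has to verify that the Fourier transforms and the convolution are taken in a regime where the convolution theorem genuinely applies (pairing, say, an $L^1$ factor with the Schwartz-class Gaussian, or two $L^2$ factors whose transforms are then convolved), and that $P^-_{s,\Omega,w}$ is integrable so that the Tonelli step is licit. Both points follow from the same square-integrability bookkeeping already carried out in the proof of Theorem~\ref{thm::finite}, but they deserve to be written out explicitly; the remaining ingredients---minimality of the negative part, normalisation of $G_\Delta$, and passing the inequality to the limit---are routine.
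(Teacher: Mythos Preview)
Your proposal is correct and follows essentially the same route as the paper: express the lower-$s$ quasiprobability as a Gaussian convolution of the higher-$s$ one, observe that convolving with a normalised nonnegative kernel cannot increase the negative volume, and read off the upper bound from the endpoint $s=1$. If anything, your version is slightly more careful than the paper's, since you keep the filter $\Omega_w$ explicit throughout and pass to the limit $w\to\infty$ only at the end, whereas the paper's proof works directly with the unfiltered $\mathcal{F}\chi'_q$ and leaves the role of the filter implicit.
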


\begin{proof}

First, we show that $\mathcal{N}_s(\rho)$ is a monotonically decreasing function of $s$ for any given $\rho$. 

First, note that $\mathcal{F}(e^{-a\abs{\beta}^2})(\alpha)= \frac{\pi}{a}e^{-(\pi \abs{\alpha})^2/a} = N_\frac{a}{2\pi^2}(\alpha)$ where $N_{\sigma^2}(\alpha)$ is the normalized Gaussian function with variance $\sigma^2$.

Second, we observe that the convolution of 2 normalized Gaussian functions just sums up the variance, i.e. $N_{a^2}*N_{b^2}(\alpha)=N_{a^2+b^2}(\alpha)$

Third, we observe that a convolution with a positive probability distribution function (PDF) can never increase the negativity. To see this, let $f(x)=f^+(x)-f^-(x)$  where $f^\pm(x)$ are non-negative functions that are well defined. Let $g(x)$ be a positive PDF. Then $f*g(x) = f^+*g(x)-f^-*g(x)$. It is then apparent that $\int dx(f*g)^-(x) \leq \int dx f^-*g(x)$ since $g(x)$ is pointwise positive. Finally, since $g$ is a PDF, $\int dx g(x) = 1$, we have $\int dx f^-*g(x) = \int dx f^-(x) \int dx g(x) =  \int dx f^-(x)$ which is just the negativity of $f(x)$.  This shows that the negativity never increases under convolution with a PDF.

Let $q=(1-s)\pi^2/2 \geq 0$, so we can consider instead $\chi'_q(\beta) \coloneqq \chi(\beta)e^{-q\abs{\beta}^2}$, $G'_q(\alpha) \coloneqq \mathcal{F}\chi'_q(\alpha)$ and $\mathcal{N}'_q(\rho) = \int d^2 G'^-_q(\alpha)$. By the convolution theorem, we know that $\mathcal{F}\chi'_q(\alpha) = \mathcal{F}\chi*N_{\frac{q}{2\pi^2}}(\alpha)$. Furthermore, for any $q_1,q_2$ satisfying $q_1+q_2=q$, we can always consider the decomposition $G'_q(\alpha)=\mathcal{F}\chi'_q(\alpha) = \mathcal{F}\chi*N_{\frac{q_1}{2\pi^2}}*N_{\frac{q_2}{2\pi^2}}(\alpha)= G'_{q_1}*N_{\frac{q_2}{2\pi^2}}(\alpha)$. Since $N_{\sigma^2}(\alpha)$ is a properly normalized PDF, and we know that a convolution with a PDF cannot increase negativity, this shows that $\mathcal{N}'_{q_1}(\rho)\geq \mathcal{N}'_{q}$ when $q_1 \leq q$. Finally, since $q$ monotonically decreases with $s$, this means $\mathcal{N}_s(\rho)$ monotonically increases with $s$. This proves the first part of the theorem.

Finally, to see that $\mathcal{N}_s(\rho) \leq \mathcal{N}(\rho)$, we just observe that at $s = 1$, we retrieve $\mathcal{N}_{s=1}=\mathcal{N}$. From the monotonicity property above, we then have $\mathcal{N}_s(\rho) \leq \mathcal{N}(\rho)$ for $s \leq 1$.


\end{proof}

We can interpret the $s$-parametrized quasiprobability distributions as the $P$-function with a Gaussian filter applied. In general, as $s$ decreases, the width of the applied Gaussian filter increases, which also decreases any observed negativity. Ultimately, any negativity that is observed in any $s$-parametrized quasiprobability function originates from the negativity of the Glauber-Sudarshan $P$-function itself. 

It is therefore natural to ask if the negativity of the $s$-parametrized quasiprobabilities is a nonclassicality measure that monotonically decreases under linear optical operations. The following theorem affirms this fact.

\begin{theorem}\label{thm::sNegMono}

The $s$-parametrized negativity $\mathcal{N}_{s}(\rho)$ is a nonclassicality measure satisfying the following properties:
\begin{enumerate}
\item $\mathcal{N}_{s}(\rho) = 0$ if  $\rho$ has a classical $P$-function.

\item
	\begin{enumerate}
	
	\item  (Weak monotonicity) $\mathcal{N}_{s}(\rho)\geq \mathcal{N}_{s}(\Phi_L (\rho))$.
	
	\item (Strong monotonicity) $\mathcal{N}_{s}(\rho) \geq \sum_i p_i\mathcal{N}_{s}(\rho_i)$ where $p_i \coloneqq \mathrm{Tr}(K^\dag_iK_i \rho )$ , $\rho_i \coloneqq (K_i \rho K^\dag_i)/ p_i$ and $\Phi_L (\rho) = \sum_i K_i \rho K^\dag_i$ is a selective linear optical operation.
	
	\end{enumerate}

\item (Convexity), i.e. $\mathcal{N}_{s}(\sum_i p_i \rho_i) \leq \sum_i p_i \mathcal{N}_{s}( \rho_i)$ .

\end{enumerate}

\end{theorem}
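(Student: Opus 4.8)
The plan is to bootstrap from Theorems~\ref{thm::monotonicity} and~\ref{thm::sMonotonicity} instead of rerunning the whole analysis. The starting point, already implicit in the proof of Theorem~\ref{thm::sMonotonicity}, is that the $s$-parametrized quasiprobability of $\rho$ is nothing but the $P$-function of a noisier state. Setting $q\coloneqq(1-s)\pi^2/2\ge 0$, the additive Gaussian (random-displacement) channel $\mathcal{G}_s(\rho)\coloneqq\int d^2\gamma\,N_{q/(2\pi^2)}(\gamma)\,D(\gamma)\rho D^\dagger(\gamma)$ is completely positive and trace preserving for every $s\le 1$, it maps classical states to classical states (a convolution of probability densities is a probability density), and by construction $\chi_s(\beta\mid\rho)=\chi(\beta\mid\mathcal{G}_s(\rho))$. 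Since the same filter $\Omega_w$ enters Definitions~\ref{def::NegP} and~\ref{def::NegPs}, this forces $P_{s,\Omega,w}(\,\cdot\mid\rho)=P_{\Omega,w}(\,\cdot\mid\mathcal{G}_s(\rho))$ for every $w$ (finiteness being guaranteed by the argument of Theorem~\ref{thm::finite}, since $|\chi_s(\beta)|\le|\chi(\beta)|\le e^{\pi^2|\beta|^2/2}$), and hence $\mathcal{N}_s(\rho)=\mathcal{N}(\mathcal{G}_s(\rho))$. Property~1 is then immediate: if $\rho$ is classical so is $\mathcal{G}_s(\rho)$, whence $\mathcal{N}_s(\rho)=\mathcal{N}(\mathcal{G}_s(\rho))=0$ by Property~1 of Theorem~\ref{thm::monotonicity}.

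The engine for the monotonicity properties is a commutation lemma: the noise channel $\mathcal{G}_s$ can always be slid to the right through a (selective) linear optical operation at the cost of \emph{increasing} its noise, never decreasing it. Concretely, for any linear optical $\Phi_L$ there is an additive Gaussian noise channel $\mathcal{G}''$, adding at least as much noise as $\mathcal{G}_s$, with $\mathcal{G}_s\circ\Phi_L=\Phi_L\circ\mathcal{G}''$; and for the Kraus set $\{K_i\}$ of a selective linear optical operation there are such a $\mathcal{G}''$ and a selective linear optical operation $\{\tilde K_i\}$ (using a possibly noisier but still classical ancilla) with $\mathcal{G}_s(K_i\rho K_i^\dagger)=\tilde K_i\,\mathcal{G}''(\rho)\,\tilde K_i^\dagger$ and unchanged branch probabilities; in the single-mode case $\mathcal{G}''=\mathcal{G}_{s'}$ with $q'=q/|c|^2\ge q$, i.e.\ $s'\le s$. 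The reason is structural: a linear optical unitary acts on phase space by a passive transformation followed by a displacement, so a linear optical channel acts on characteristic functions by $\chi(\beta)\mapsto\chi(\mathbf{C}\beta)\,h(\beta)$ with $\|\mathbf{C}\|\le 1$ (no squeezing or amplification is available) and $h$ independent of the state; pushing a symmetric Gaussian noise applied to the output back through the map merely replaces $\beta$ by $\mathbf{C}\beta$ in the Gaussian, which is absorbed by inflating the noise covariance by $(\mathbf{C}^\dagger\mathbf{C})^{-1}$, while the ancilla and flag registers either drop out under the trace or only acquire extra classical noise. If $\mathbf{C}$ is singular then $\Phi_L$ fixes a phase-space direction independently of $\rho$ and the inequalities below are trivial there.

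Granting the lemma, the remaining properties follow. For weak monotonicity, $\mathcal{N}_s(\Phi_L(\rho))=\mathcal{N}(\mathcal{G}_s\Phi_L(\rho))=\mathcal{N}(\Phi_L\mathcal{G}''(\rho))\le\mathcal{N}(\mathcal{G}''(\rho))\le\mathcal{N}(\mathcal{G}_s(\rho))=\mathcal{N}_s(\rho)$: the first inequality is weak monotonicity of $\mathcal{N}$ under the linear optical map $\Phi_L$ (Theorem~\ref{thm::monotonicity}), and the second holds because $\mathcal{G}''(\rho)$ is a further Gaussian convolution of $\mathcal{G}_s(\rho)$, which cannot raise the negativity (as shown in the proof of Theorem~\ref{thm::sMonotonicity}). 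For strong monotonicity, with $\rho_i=(K_i\rho K_i^\dagger)/p_i$ we get $\mathcal{N}_s(\rho_i)=\mathcal{N}((\tilde K_i\mathcal{G}''(\rho)\tilde K_i^\dagger)/p_i)$ and $\Tr(\tilde K_i\mathcal{G}''(\rho)\tilde K_i^\dagger)=\Tr(\mathcal{G}_s(K_i\rho K_i^\dagger))=\Tr(K_i\rho K_i^\dagger)=p_i$; applying strong monotonicity of $\mathcal{N}$ to the state $\mathcal{G}''(\rho)$ and the Kraus set $\{\tilde K_i\}$ gives $\mathcal{N}(\mathcal{G}''(\rho))\ge\sum_i p_i\,\mathcal{N}_s(\rho_i)$, and $\mathcal{N}(\mathcal{G}''(\rho))\le\mathcal{N}(\mathcal{G}_s(\rho))=\mathcal{N}_s(\rho)$ as before, so $\sum_i p_i\,\mathcal{N}_s(\rho_i)\le\mathcal{N}_s(\rho)$. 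Convexity needs nothing new: $P_{s,\Omega,w}(\alpha\mid\,\cdot\,)$ is linear in the state, so $P_{s,\Omega,w}(\alpha\mid\sum_i p_i\rho_i)=\sum_i p_i\,P_{s,\Omega,w}(\alpha\mid\rho_i)$, while pointwise $(\sum_i p_i f_i)^-\le\sum_i p_i f_i^-$; integrating and letting $w\to\infty$ repeats the argument used for Theorem~\ref{thm::monotonicity}.

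The step needing the most care is the commutation lemma, especially making its selective version rigorous for a fully general network --- several beam splitters and phase shifters, displacements, classical ancillas, and a measured flag register --- where one must verify that the symmetric Gaussian filter genuinely slides from output back to input with its noise only increasing, so that the residual operation remains free. That monotone direction is exactly where the passivity of the permitted optics is essential, and it is what would fail for a resource theory admitting amplification. Once the lemma is secured, Properties~1--3 follow mechanically from the $s=1$ case of Theorem~\ref{thm::monotonicity} together with Theorem~\ref{thm::sMonotonicity}.
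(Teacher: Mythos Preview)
Your strategy is the paper's strategy: identify $\mathcal{N}_s(\rho)=\mathcal{N}(\mathcal{G}_s(\rho))$ via the Gaussian random-displacement channel, establish a commutation relation between that channel and linear optical maps, and then bootstrap from Theorems~\ref{thm::monotonicity} and~\ref{thm::sMonotonicity}. The only substantive difference is the \emph{direction} in which you run the commutation. You push the noise from the output back to the input, asserting $\mathcal{G}_s\circ\Phi_L=\Phi_L\circ\mathcal{G}''$ with $\mathcal{G}''$ noisier; the paper instead pushes $\Phi_L$ forward through the noise, obtaining (in effect) $\Phi_L\circ\mathcal{G}_s=\mathcal{G}_{s'}\circ\Phi_L$ with $s'\ge s$, i.e.\ \emph{less} noise on the output side (their chain Eqs.~\eqref{eqn::s1}--\eqref{eqn::s7}). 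The paper's direction is the cleaner one to prove directly: commuting the input-side displacement $D_A(\gamma)$ forward through $U_L$ spreads it over all output modes, and the ancilla components genuinely vanish under the partial trace, leaving a pure $\mathcal{G}_{s'}$ on the system with $q'=|r|^2 q$. Your backward direction, argued head-on, hits the snag you half-acknowledge: pulling an output-side displacement back through $U_L$ deposits $\gamma$-correlated displacements on the ancilla \emph{before} $U_L$ acts, so the ancilla state depends on the integration variable and the map is not literally of the form $\tilde\Phi_L\circ\mathcal{G}''$ with a fixed classical ancilla. Your lemma is nonetheless correct---it follows by inverting the paper's identity (take $q''=q/|r|^2$, handling $r=0$ separately as you note)---but the ``push-back'' heuristic you sketch does not by itself close the argument. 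In short: same architecture and a valid proof, but the paper's ordering of the commutation is what makes the key step transparent.
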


\begin{proof}
Property 1 immediately follows from the fact that $N_s$ obtained from the Gaussian convolution of the $P$-function. Since the $P$-function of a classical state is pointwise positive, a convolution with a Gaussian function, which is itself also pointwise positive, cannot produce a negativity.

Property 3 follows from the convexity of $\mathcal{N}$. We observe that the $s$ paramatrized quasiprobabilities of any given state are themselves physical $P$-functions, so $\mathcal{N}_s$ must be convex if $\mathcal{N}$ is convex.

Proving the weak and strong monotonicity properties will first require us to gather several facts. Let $\Phi_L$ be some linear optical unitary. By definition, this means we can write $$\Phi_L(\rho_A) := \mathrm{Tr}_E [ U_L (\rho_A \otimes \sigma_E) U_L^\dag ].$$ Also recall that $U_L$ is a linear optical unitary, and so will map a $N$ mode bosonic creation operator $a_{\vec{\mu}}^\dag \coloneqq \sum_{i=1}^N \mu_i a_i^\dag$ into the form $a_{\vec{\mu'}}^\dag + \oplus_{i-1}^N \alpha_i\openone_i$. $\vec{\mu} \rightarrow \vec{\mu'}$ represents a rotation in $N$ dimensional complex space (otherwise called an $SU(N)$ interferometer~\cite{Reck1994}), while $\alpha_i$ represents linear displacements in phase space on the $i$th mode.  We will assume the index $i=1$ denotes the mode of the system of interest $A$, with the other indices representing the rest of the ancillary modes. We will also denote the superoperator of the linear optical unitary $U_L$ as $\mathcal{U}_L = U_L (\cdot) U_L^\dag$.

Consider the displacement operator $D(\beta)$ acting on mode 1. This performs the map $a^\dag_1 \rightarrow a^\dag_1+\beta\openone_1$. If we have ancillary modes, it is a linear displacement in the direction $\vec{\beta} = (\beta, 0, \ldots, 0)$ in N dimensional complex parameter space. For any direction $\vec{\alpha} = (\alpha_1, \ldots, \alpha_N)$, let $\mathcal{D}_{\vec{\alpha}}(\cdot) \coloneqq D_1(\alpha_1)\ldots D_N(\alpha_N) (\cdot) D_1^\dag(\alpha_1)\ldots D_N^\dag(\alpha_N)$. 

In complex parameter space, $\mathcal{D}_{\vec{\beta}} \circ \mathcal{U}_L$ corresponds to a displacement, followed by a unitary rotation, followed by another displacement, i.e. $\Delta_{\vec{\beta}}  U \Delta_{\vec{\alpha}}$, where $\Delta_{\vec{\beta}}$ is a displacement in direction $\vec{\beta}$, and $U$ is a (unitary) rotation. Displacements commute, so $\Delta_{\vec{\beta}} \Delta_{\vec{\alpha}} = \Delta_{\vec{\alpha}} \Delta_{\vec{\beta}}$. Furthermore, a unitary rotation followed by displacement is the same as a rotated displacement followed by a unitary rotation, i.e. $\Delta_{\vec{\beta}} U = U \Delta_{U^\dag\vec{\beta}}$. As a result, we have $\Delta_{\vec{\beta}} ( U  \Delta_{\vec{\alpha}})= (U \Delta_{\vec{\alpha}} ) \Delta_{U^\dag\vec{\beta}}$. This implies that $\mathcal{D}_{\vec{\beta}} \circ \mathcal{U}_L = \mathcal{U}_L \circ \mathcal{D}_{U'(\Phi_L)\vec{\beta}}$, where $U'(\Phi_L)$ is some unitary depending on $\Phi_L$.

We make use of two other observations. First, in Ref.~\cite{Tan2017} it was noted that when the $P$-function of $\rho$ is a regular function that does not contain any singularities, the negative volume $\mathcal{N}(\rho)$ satisfies both weak and strong monotonicity conditions.

Second, in Ref.~\cite{Kuhn2018} it was observed that the filtered function $P_{\Omega,w}$ is the output of an interaction with an ancilla and a highly transmissive beam splitter. When the Fourier transform of the filter $\Omega_w$ is pointwise positive, the filtering operation is actually a linear optical map $\Phi_{\Omega,w}$, which maps an initial $P$-function to the filtered $P$-function $P_{\Omega,w}$. Furthermore, the filtering operation can be interpreted as a stochastic displacement operation $\Phi_{\Omega,w}=\sum_i p_i \mathcal{D}_{\vec{\alpha}_i}$ with probability distribution $p_i$ sampled from the probability density function $\mathcal{F}\Omega_w(\alpha)$. If we choose the filter to be the Gaussian filter $\Omega_w(\beta) = e^{-\abs{\beta/w}^2} =  e^{-(1-s)\pi^2\abs{\beta}^2/2} $ where $1/w =(1-s)\pi^2/2 $, then we see that $\mathcal{N}(\Phi_{\Omega,w}(\rho)) = \mathcal{N}_s(\rho)$. 

Choosing $\mathcal{F}\Omega_w$ to be a normalized, Guassian PDF, we can obtain the following series of inequalities:

\begin{align}
 \mathcal{N}[\Phi_{\Omega,w}(\rho)] &\geq  \mathcal{N}[\Phi_L\Phi_{\Omega,w}(\rho)] \label{eqn::s1}\\
 &= \mathcal{N}[\Phi_L\sum_i p_i\mathcal{D}_{\vec{\alpha}_i}(\rho)] \label{eqn::s2}\\
 &= \mathcal{N}\{\Tr_a[V_L\sum_i p_i\mathcal{D}_{\vec{\alpha}_i}(\rho \otimes \sigma_a) V_L^\dag]\} \label{eqn::s3}\\
 &= \mathcal{N}\{\Tr_a[\sum_i p_i\mathcal{D}_{U(\Phi_L)\vec{\alpha}_i}(V_L\rho\otimes \sigma_a V_L^\dag)]\} \label{eqn::s4}\\
  &= \mathcal{N}[\sum_i p_i\mathcal{D}_{\abs{r}\vec{\alpha}_i}\Phi_L(\rho) ]  \label{eqn::s5}\\
  &= \mathcal{N}[\Phi_{\Omega,w/\abs{r}}\Phi_L(\rho)]  \label{eqn::s6}\\
  &\geq \mathcal{N}[\Phi_{\Omega,w}\Phi_L(\rho)] \label{eqn::s7}
\end{align}

Eqn~\ref{eqn::s1} comes from the fact that the negativity of the $P$-function, $\mathcal{N}$, is a linear optical monotone, and that both $\Phi_L$ and $\Phi_{\Omega,w}$ is a linear optical map (see Ref.~\cite{Kuhn2018}). Eqn~\ref{eqn::s2} uses the decomposition of $\Phi_{\Omega,w}$ into a stochastic displacement operation $\Phi_{\Omega,w}=\sum_i p_i \mathcal{D}_{\vec{\alpha}_i}$. Eqn~\ref{eqn::s3} follows from the definition of a linear optical map. Eqn~\ref{eqn::s4} uses the relation $\mathcal{D}_{\vec{\beta}} \circ \mathcal{U}_L = \mathcal{U}_L \circ \mathcal{D}_{U'(\Phi_L)\vec{\beta}}$ since $V_L$ is a linear optical unitary. Eqn~\ref{eqn::s5} comes from the fact that $\vec{\alpha_i} = (\alpha_i, 0 ,\ldots, 0)$ is a displacement on the first mode $U(\Phi_L)\vec{\alpha_i} = \alpha_i(u_1, \ldots, u_n)$ where $\sum_{i=1}^n \abs{u_i}^2=1$. Setting $u_1 = r$ and observing that a phase rotation does not change the negativity leads to the required equality. Eqn~\ref{eqn::s6} comes from the observation that $\abs{r} \leq 1$. Assuming $\Phi_{\Omega,w}=\sum_i p_i \mathcal{D}_{\vec{\alpha}_i}$ is a Gaussian filter where $p_i$ is sampled from a normalized Gaussian distribution, $\sum_i p_i \mathcal{D}_{\abs{r}\vec{\alpha}_i}$ is a Gaussian filter with scaling factor $\frac{1}{\abs{r}}$, i.e.   $\Phi_{\Omega,w/\abs{r}} = \sum_i p_i \mathcal{D}_{\abs{r}\vec{\alpha}_i}$. The inequality in Eqn~\ref{eqn::s7} follows from the fact that the $s$-parametrized negativity monotonically increases with $s$ and hence $w$ (see Theorem~\ref{thm::sMonotonicity}). The final inequality then gives us $\mathcal{N}_s(\rho) \geq \mathcal{N}_s[\Phi_L(\rho)]$. This proves the weak monotonicity property.

The strong monotonicity property follows from largely the same arguments up until Eqn~\ref{eqn::s6}. From there, we have the following series of inequalities:

\begin{align}
 \mathcal{N}[\Phi_{\Omega,w}(\rho)] &\geq  
   \mathcal{N}[\Phi_{\Omega, w/\abs{r}}\Phi_L(\rho)]  \\
   &= \mathcal{N}[\sum_{i} \Phi_{\Omega,w/\abs{r}}(K_i \rho K_i^\dag)]  \label{eqn::s8}\\
   &\geq \sum_{i}\mathcal{N}[ \Phi_{\Omega,w/\abs{r}}(q_i\rho_i)] \label{eqn::s9}\\ 
   &\geq  \sum_{i}q_i\mathcal{N}[ \Phi_{\Omega,w}(\rho_i)] \label{eqn::s10}
\end{align}

In Eqn~\ref{eqn::s8}, we write the selective linear map $\Phi_L$ in terms of its Kraus decompositions $\Phi_L(\rho) = \sum_i L_i\rho L_i^\dag$. In Eqn~\ref{eqn::s9} we use the property that $\mathcal{N}$ is strongly monotonic, and denote $q_i = \Tr(K_i \rho K_i^\dag)$. In Eqn~\ref{eqn::s10} we used Theorem~\ref{thm::sMonotonicity} together with the fact that $w$ monotonically increases with $s$. The final inequality then gives us $\mathcal{N}_s(\rho) \geq \sum_i q_i \mathcal{N}_s(\rho_i)$, which proves strong monotonicity.

\end{proof}

Theorem~\ref{thm::sNegMono} therefore establishes that the set of $s$-parametrized negativities $\mathcal{N}_s$ forms a continuous hierarchy of nonclassicality measures under the operational resource theory of Ref.\cite{Tan2017}.

\section{Approximate nonclassicality monotones}

The negativity of quasiprobabilities are well defined in Definitions~\ref{def::NegP} and \ref{def::NegPs}. However they do not always lead to finite quantities. For instance, highly singular states such as squeezed states can possess infinite negativities. This can be verified numerically by applying an appropriate filter and computing the filtered negativities as $w \rightarrow \infty$.  From Theorem~\ref{thm::robustness}, we know that this is because some states require an infinite amount of statistical mixing with classical states before their nonclassicality is erased. Nevertheless, $\mathcal{N}_s$ remains a linear optical monotone. For $s=1$, we retrieve the negativity $\mathcal{N}$ of the $P$-function (see Definition~\ref{def::NegP}), which is a faithful nonclassicality measure. This means that the measure is able to unambiguously identify every nonclassical state. In contrast, for $s<1$, $\mathcal{N}_s$ corresponds to weaker nonclassicality measures as it may not be able to identify some nonclassical states. For instance, at $s=0$, $\mathcal{N}_s$ is the negativity of the Wigner function\cite{Kenfack2004}. It is a well known property of the Wigner function that its negativity cannot detect squeezed states. 

It is therefore natural to ask whether it is possible to avoid the aforementioned issues with infinite values while simultaneously maximizing the number of identifiable nonclassical states. In this section, we show that this is possible via an appropriate choice of filters that satisfies the full suite of properties (a)-(e) (see Preliminaries).

We begin with 2 lemmas that are particular consequences of properties (d) and (e).

\begin{lemma}
Suppose the filter $\Omega_w$ satisfies properties (a)-(d). Then for any $\abs{r}< 1$, $$\mathcal{N}(P_{\Omega, w}) \leq \mathcal{N}(P_{\Omega, w/\abs{r}}) [1+2 \mathcal{N}(\mathcal{F}\Omega_{t})]+\mathcal{N}(\mathcal{F}\Omega_{t}).$$
\end{lemma}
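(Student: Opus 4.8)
The plan is to exploit property (d) to rewrite the filtered $P$-function as a convolution, and then bound the negativity of a convolution in terms of the negativities of its two factors.

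First I would apply property (d) to write $\Omega_w(\beta) = \Omega_{w/\abs{r}}(\beta)\,\Omega_t(\beta)$ for the appropriate $t>0$. Since $\chi_{\Omega,w}(\beta) = \chi(\beta)\Omega_w(\beta) = [\chi(\beta)\Omega_{w/\abs{r}}(\beta)]\,\Omega_t(\beta)$, the convolution theorem gives $P_{\Omega,w} = P_{\Omega,w/\abs{r}} * \mathcal{F}\Omega_t$. By Theorem~\ref{thm::finite}, $P_{\Omega,w/\abs{r}}$ is everywhere finite, and $\mathcal{F}\Omega_t$ is bounded since $\Omega_t = \Omega_t^1\Omega_t^2 \in L^1$ by Cauchy's inequality (property (a)); hence the decompositions $P_{\Omega,w/\abs{r}} = f^+ - f^-$ and $\mathcal{F}\Omega_t = g^+ - g^-$ into pointwise nonnegative parts are well defined. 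If either $\mathcal{N}(P_{\Omega,w/\abs{r}})$ or $\mathcal{N}(\mathcal{F}\Omega_t)$ is infinite the asserted inequality is trivial, so I may assume both are finite, which makes $P_{\Omega,w/\abs{r}}$ and $\mathcal{F}\Omega_t$ integrable.

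Next I would establish the general convolution bound: for $L^1$ functions $f = f^+ - f^-$ and $g = g^+ - g^-$ with pointwise nonnegative parts, expanding $f*g = f^+*g^+ + f^-*g^- - f^+*g^- - f^-*g^+$ and discarding the two nonnegative terms $f^+*g^+$ and $f^-*g^-$ yields the pointwise estimate $(f*g)^-(\alpha) \leq (f^+*g^-)(\alpha) + (f^-*g^+)(\alpha)$. Integrating and using $\int (u*v)\,d^2\alpha = (\int u\,d^2\alpha)(\int v\,d^2\alpha)$ for nonnegative $u,v$ (Tonelli), this gives
$$\mathcal{N}(f*g) \;\leq\; \lVert f^+\rVert_1\,\mathcal{N}(g) + \lVert g^+\rVert_1\,\mathcal{N}(f).$$
Finally I would evaluate the two norms. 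Property (c) gives $\int d^2\alpha\, P_{\Omega,w/\abs{r}}(\alpha) = 1$, so $\lVert f^+\rVert_1 = 1 + \mathcal{N}(P_{\Omega,w/\abs{r}})$; and $\int d^2\alpha\, \mathcal{F}\Omega_t(\alpha) = \Omega_t(0) = 1$ (Fourier inversion plus property (c)), so $\lVert g^+\rVert_1 = 1 + \mathcal{N}(\mathcal{F}\Omega_t)$. Substituting into the convolution bound and collecting terms reproduces
$$\mathcal{N}(P_{\Omega,w}) \;\leq\; \mathcal{N}(P_{\Omega,w/\abs{r}})\big[1 + 2\mathcal{N}(\mathcal{F}\Omega_t)\big] + \mathcal{N}(\mathcal{F}\Omega_t).$$

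I expect the main obstacle to be not any single estimate but the integrability bookkeeping that makes the convolution manipulations rigorous: checking via Young's inequality that the four convolutions $f^\pm * g^\pm$ are finite almost everywhere (using the integrability secured above), verifying that Tonelli's theorem legitimately gives $\int(u*v) = (\int u)(\int v)$, and confirming $\mathcal{F}\Omega_t$ is well-behaved enough for $\mathcal{N}(\mathcal{F}\Omega_t)$ to be the quantity that appears. Once this is in place, the remainder is elementary algebra.
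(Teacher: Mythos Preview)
Your proposal is correct and follows essentially the same approach as the paper: use property~(d) and the convolution theorem to write $P_{\Omega,w} = P_{\Omega,w/\abs{r}} * \mathcal{F}\Omega_t$, expand $(f^+-f^-)*(g^+-g^-)$, bound $(f*g)^-$ by $f^+*g^- + f^-*g^+$, and use the normalizations $\int P_{\Omega,w/\abs{r}} = 1$ and $\int \mathcal{F}\Omega_t = \Omega_t(0) = 1$ to simplify. Your version is actually more careful than the paper's about the integrability hypotheses (Tonelli, Young, the trivial case of infinite negativity), which the paper simply assumes without comment.
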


\begin{proof}

Let $f(\alpha) = f^+(\alpha)- f^-(\alpha)$, and $g(\alpha) = g^+(\alpha)- g^-(\alpha)$, where $f^\pm(\alpha)$ and $g^\pm(\alpha)$ are pointwise nonnegative functions. We also assume that for $f, g$ are normalized such that $\int d^2\alpha f(\alpha) = \int d^2\alpha g(\alpha) = 1$.

We note that $f*g = (f^+ - f^-)*(g^+ - g^-) = f^+*g^+ + f^-*g^- -(f^+*g^- + f^-*g^+)$. As a result, we have the following series of inequalities

\begin{align*}
&\int d^2  \alpha(f*g)^-(\alpha) \\
&\leq \int d^2\alpha (f^+*g^-(\alpha) + f^-*g^+(\alpha)) \\
&= \int d^2\alpha f^+(\alpha) \int d^2\alpha g^-(\alpha) + \int d^2\alpha f^-(\alpha) \int d^2\alpha g^+(\alpha) \\
&= (1+ \int d^2\alpha f^-(\alpha)) \int d^2\alpha g^-(\alpha) \\ & \qquad \qquad + \int d^2\alpha f^-(\alpha) (1+ \int d^2\alpha g^-(\alpha)) \\
&= \int d^2\alpha g^-(\alpha) (1 + 2 \int d^2\alpha f^-(\alpha)) + \int d^2\alpha f^-(\alpha),
\end{align*} where we used the identity $\int d^2\alpha f^+(\alpha)=1+ \int d^2\alpha f^-(\alpha)$ which comes from the fact that $\int d^2\alpha f(\alpha)=1$.

Since $\Omega_w$ satisfies properties (a)-(d), for any characteristic function $\chi$, we have $\mathcal{F}(\Omega_{w} \chi) = \mathcal{F}(\Omega_{w/\abs{r}}\Omega_{t}\chi) = \mathcal{F}(\Omega_{w/\abs{r}}\chi)*\mathcal{F}(\Omega_{t}) = P_{\Omega, w/\abs{r}}*\mathcal{F}(\Omega_{t})$.

We get the required expression by setting $f = \mathcal{F}(\Omega_{t})$ and $g = P_{\Omega, w/\abs{r}}$.

\end{proof}

\begin{lemma}
Suppose the filter $\Omega_w$ satisfies property (a)-(c) and (e). Then for any given linear optical map $\Phi_L$, $\mathcal N [P_{\Omega,w}(\rho)] \geq \mathcal N \{P_{\Omega,w}[\Phi_L(\rho)]\}$ where the factor $r$ depends only on $\Phi_L$.

\end{lemma}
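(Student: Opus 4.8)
The plan is to re-run, at fixed $w$, the commutation argument used for the Gaussian filter in the proof of Theorem~\ref{thm::sNegMono} (Eqns~\ref{eqn::s1}--\ref{eqn::s6}), now for a general filter satisfying (a)--(c) and (e). Properties (a)--(c) ensure, via Theorem~\ref{thm::finite}, that $P_{\Omega,w}$ is a genuine quasiprobability with a well-defined negative volume and, via property (c), that $\mathcal{F}\Omega_w$ integrates to $1$. The one place where the explicit Gaussian form entered the proof of Theorem~\ref{thm::sNegMono} — the identification of a rescaled stochastic displacement with a filter of rescaled width — will here be handled by property (e).

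I would first recall from Ref.~\cite{Kuhn2018} that, when $\mathcal{F}\Omega_w$ is pointwise nonnegative (hence, by property (c), a normalized probability density), the map $\Phi_{\Omega,w}$ sending a state to the state whose $P$-function is $P_{\Omega,w}$ is linear optical, realized as a stochastic displacement of the system mode, $\Phi_{\Omega,w}=\sum_i p_i\,\mathcal{D}_{\vec{\alpha}_i}$ with $\vec{\alpha}_i=(\alpha_i,0,\dots,0)$ and weights $p_i$ sampled from $\mathcal{F}\Omega_w$ (a continuous mixing measure in general; I write sums for brevity). Since $\Phi_L\circ\Phi_{\Omega,w}$ is again linear optical and $\mathcal{N}$ is a linear optical monotone (Theorem~\ref{thm::monotonicity}), $\mathcal{N}[P_{\Omega,w}(\rho)]=\mathcal{N}[\Phi_{\Omega,w}(\rho)]\geq\mathcal{N}[\Phi_L\Phi_{\Omega,w}(\rho)]$. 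Next I would commute the displacements through $\Phi_L(\cdot)=\Tr_E[U_L(\cdot\otimes\sigma_E)U_L^\dag]$ using the identity $\mathcal{D}_{\vec{\beta}}\circ\mathcal{U}_L=\mathcal{U}_L\circ\mathcal{D}_{U'(\Phi_L)\vec{\beta}}$ from the proof of Theorem~\ref{thm::sNegMono}, giving $\Phi_L\Phi_{\Omega,w}(\rho)=\sum_i p_i\,\Tr_E[\mathcal{D}_{U'(\Phi_L)\vec{\alpha}_i}(U_L(\rho\otimes\sigma_E)U_L^\dag)]$. Because $\vec{\alpha}_i$ points along mode $1$, $U'(\Phi_L)\vec{\alpha}_i=\alpha_i(u_1,\dots,u_N)$ with $\sum_j\abs{u_j}^2=1$; the displacements on the ancillary modes cancel under $\Tr_E$, and — as in Theorem~\ref{thm::sNegMono} — a phase rotation of the surviving mode-$1$ displacement parameter leaves the negativity unchanged, so with $r\coloneqq u_1$ (a number fixed by $U'(\Phi_L)$, hence by $\Phi_L$ alone, and with $\abs{r}\le1$) I obtain $\mathcal{N}[\Phi_L\Phi_{\Omega,w}(\rho)]=\mathcal{N}[\sum_i p_i\,\mathcal{D}_{\abs{r}\vec{\alpha}_i}(\Phi_L(\rho))]$.

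The decisive step — where property (e) is used — is to identify $\sum_i p_i\,\mathcal{D}_{\abs{r}\vec{\alpha}_i}$ with a filtering map of rescaled width. From $\Omega_w(\beta)=\Omega_{kw}(k\beta)$ and the two-dimensional Fourier scaling rule $\mathcal{F}[h(k\cdot)](\alpha)=k^{-2}(\mathcal{F}h)(\alpha/k)$ one gets $\mathcal{F}\Omega_{w/\abs{r}}(\alpha)=\abs{r}^{-2}\,\mathcal{F}\Omega_w(\alpha/\abs{r})$, which is exactly the probability density of $\abs{r}X$ when $X$ is distributed as $\mathcal{F}\Omega_w$. Hence $\sum_i p_i\,\mathcal{D}_{\abs{r}\vec{\alpha}_i}=\Phi_{\Omega,w/\abs{r}}$, which is again linear optical since $\mathcal{F}\Omega_{w/\abs{r}}\ge0$ follows from $\mathcal{F}\Omega_w\ge0$. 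Therefore $\mathcal{N}[\sum_i p_i\,\mathcal{D}_{\abs{r}\vec{\alpha}_i}(\Phi_L(\rho))]=\mathcal{N}[\Phi_{\Omega,w/\abs{r}}(\Phi_L(\rho))]=\mathcal{N}[P_{\Omega,w/\abs{r}}(\Phi_L(\rho))]$, and chaining the displays gives
\[
\mathcal{N}[P_{\Omega,w}(\rho)]\ \geq\ \mathcal{N}[P_{\Omega,w/\abs{r}}(\Phi_L(\rho))],
\]
with $\abs{r}\le1$ depending only on $\Phi_L$, which is the assertion of the lemma (the general-filter counterpart of Eqns~\ref{eqn::s1}--\ref{eqn::s6}).

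The hard part will be bookkeeping rather than any conceptual gap. One must fix the Fourier-transform conventions so that the $\abs{r}^{-2}$ Jacobian in the scaling rule is exactly absorbed by property (e), and run the argument with the (in general continuous) mixing measure instead of a finite sum; one must verify that the ancillary-mode displacements genuinely cancel inside $\Tr_E$; and one must make the ``a phase rotation does not change the negativity'' step rigorous — this is cleanest if $\Omega_w$ is taken rotationally symmetric (a harmless restriction that the natural filters satisfy), because then the filtering map commutes with phase rotations and the residual rotation drops out. The standing hypothesis that $\mathcal{F}\Omega_w$ be a genuine probability density should also be stated explicitly, since it is what licenses the Ref.~\cite{Kuhn2018} representation of $\Phi_{\Omega,w}$ as a linear optical map.
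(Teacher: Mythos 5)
Your chain of reasoning is essentially the paper's own: represent the filtering as the displacement-averaged map $\Phi_{\Omega,w}(\rho)=\int d^2\gamma\,\mathcal{F}\Omega_w(\gamma)D(\gamma)\rho D^\dagger(\gamma)$, commute the displacements through $\Phi_L$ via $\mathcal{D}_{\vec{\beta}}\circ\mathcal{U}_L=\mathcal{U}_L\circ\mathcal{D}_{U'(\Phi_L)\vec{\beta}}$, reduce the surviving mode-$1$ displacement to $\abs{r}\alpha$ with $\abs{r}\leq 1$ fixed by $\Phi_L$, and use property (e) together with the Fourier scaling rule $\mathcal{F}\Omega_w(\alpha/\abs{r})=\abs{r}^2\mathcal{F}\Omega_{w/\abs{r}}(\alpha)$ to recognize the rescaled average as $\Phi_{\Omega,w/\abs{r}}$, concluding $\mathcal{N}[P_{\Omega,w}(\rho)]\geq\mathcal{N}\{P_{\Omega,w/\abs{r}}[\Phi_L(\rho)]\}$. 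You also read the statement correctly: the $/\abs{r}$ is missing from the displayed inequality by a typo, as the dangling reference to ``the factor $r$'' and the way the lemma is used in Theorem~\ref{thm::filtNegMono} make clear.

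The one genuine divergence is your closing ``standing hypothesis'' that $\mathcal{F}\Omega_w$ be a bona fide probability density, which you invoke so that $\Phi_{\Omega,w}$ is a linear optical channel and Theorem~\ref{thm::monotonicity} can be applied to $\Phi_L\circ\Phi_{\Omega,w}$. The paper deliberately avoids this assumption: its proof stresses that the identity $P(\Phi_{\Omega,w}(\rho))=P_{\Omega,w}(\rho)$ does not require $\mathcal{F}\Omega_w$ to be pointwise positive, and this is essential, because the lemma is applied in Theorem~\ref{thm::filtNegMono} precisely to filters such as $\Omega_{w,\epsilon}$ of Proposition~\ref{prop::filt}, whose Fourier transform has nonzero negativity $\delta$ --- that negativity is the very source of the error term, and the paper explicitly says it is unclear whether a filter satisfying (a)--(e) with pointwise positive Fourier transform exists (if it did, the approximate-monotone machinery would be unnecessary). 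Adding your hypothesis therefore renders the lemma vacuous for its intended use. The repair is the paper's route: do not require $\Phi_{\Omega,w}$ to be a channel at all. It is a trace-preserving but generally non-positive map; what matters is only that $P_{\Omega,w}(\rho)$ is a regular, normalized function (Theorem~\ref{thm::finite} and property (c)), so the monotonicity of $\mathcal{N}$ under the genuine linear optical map $\Phi_L$ --- the regular-$P$-function argument of Ref.~\cite{Tan2017} underlying Theorem~\ref{thm::monotonicity} --- applies to it directly, and the commutation, change-of-variables and scaling steps are identities of integrals that hold for signed densities $\mathcal{F}\Omega_w$ just as well. Your remaining caveats (continuous mixing measure, rotational symmetry of $\Omega_w$ to absorb the phase of $r$) are fine and are implicitly relied on by the paper too.
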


\begin{proof}

Similar to an observation from Ref.~\cite{Kuhn2018} and the proof of Theorem~\ref{thm::sNegMono}, we define the map
\begin{equation*}
\Phi_{\Omega, w}(\rho) = \int d^2 \gamma \mathcal{F}\Omega_{w}(\gamma)D (\gamma)\rho D^\dagger (\gamma).
\end{equation*}

From this, we see that $P(\Phi_{\Omega, w}(\rho)) = P_{\Omega,w}(\rho)$, so the $P$-function after this map is equivalent to applying a filter $\Omega_w$. This property does not require $\mathcal{F}\Omega_w(\alpha)$ to be pointwise positive for every $\alpha$.

Using the notation $\mathcal{D}_\alpha (\cdot) = D (\alpha)(\cdot) D^\dagger (\alpha) $, we follow a similar argument with the proof of Theorem~\ref{thm::sNegMono}, resulting in the following series of inequalities:
\begin{align}
    \mathcal{N}[\Phi_{\Omega, w}(\rho)] &\geq \mathcal{N}[\Phi_L \Phi_{\Omega, w}(\rho)] \\
    &= \mathcal{N}[\Phi_L \int d^2 \alpha \mathcal{F}\Omega_{w} (\alpha) \mathcal{D}_\alpha (\rho)] \\
    &= \mathcal{N}[\int d^2 \alpha \mathcal{F}\Omega_{w} (\alpha)\mathcal{D}_{\abs{r}\alpha} \Phi_L (\rho)] \\
    &= \mathcal N [\int \frac{d^2 \alpha}{\abs{r}^2} \mathcal{F}\Omega_{w} (\frac{\alpha}{\abs{r}})\mathcal{D}_\alpha\Phi_L(\rho)] \\
   &= \mathcal N (\int d^2 \alpha \mathcal{F}\Omega_{w/\abs{r}} (\alpha) \mathcal D _\alpha \Phi_L (\rho))  \label{w-alpha} \\
    &= \mathcal N (\Phi_{\Omega, w/\abs{r}} \Phi_L (\rho)),
\end{align} where $\abs{r} \leq 1$ and depends only on $\Phi_L$. Which is the required expression.
Eqn. \ref{w-alpha} comes from the observation that whenever the filter satisfies $\Omega_{w}(\beta) = \Omega_{kw}(k\beta)$ for any $k>0$ (property (e)), then together with the scaling property $\mathcal{F}[f(\abs{r}\beta)](\alpha)=\mathcal{F}f(\alpha/\abs{r})/\abs{r}^2$ we have $$\mathcal{F}\Omega_w (\alpha /\abs{r}) = \abs{r}^2 \mathcal{F}\Omega _{w/\abs{r}}(\alpha).$$

\end{proof}

The above lemmas then imply the following bound for a finite $w$.

\begin{theorem}\label{thm::filtNegMono}
If the filter $\Omega_w$ satisfies properties (a)-(e), then for any given linear optical map $\Phi_L$, we have $$(1+2 \delta)\mathcal N [P_{\Omega, w}(\rho)]+\delta \geq \mathcal N \{P_{\Omega, w}[\Phi_L(\rho)] \}$$ where $\delta = \mathcal N (\mathcal{F}\Omega_{w=1})$.

\end{theorem}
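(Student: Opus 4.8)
The plan is to obtain the bound by chaining the two preceding lemmas, with property (e) playing the role of collapsing the scale-dependent error term of the first lemma into the single constant $\delta$. Given the linear optical map $\Phi_L$, the second lemma supplies a factor $\abs{r}\le 1$ that depends only on $\Phi_L$ and satisfies $\mathcal{N}[P_{\Omega,w}(\rho)]\ge \mathcal{N}[P_{\Omega,w/\abs{r}}(\Phi_L(\rho))]$. I would first dispose of the trivial case $\abs{r}=1$: there this reads $\mathcal{N}[P_{\Omega,w}(\rho)]\ge \mathcal{N}[P_{\Omega,w}(\Phi_L(\rho))]$, and since $\delta\ge 0$ and $\mathcal{N}\ge 0$ one may freely inflate the left-hand side to $(1+2\delta)\mathcal{N}[P_{\Omega,w}(\rho)]+\delta$, which already gives the claim. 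So from now on assume $\abs{r}<1$.

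Next I would apply the first lemma to the \emph{output} state $\Phi_L(\rho)$ using this same $\abs{r}$, which is permissible precisely because $\abs{r}<1$. This yields $\mathcal{N}[P_{\Omega,w}(\Phi_L(\rho))]\le \bigl(1+2\mathcal{N}(\mathcal{F}\Omega_t)\bigr)\,\mathcal{N}[P_{\Omega,w/\abs{r}}(\Phi_L(\rho))]+\mathcal{N}(\mathcal{F}\Omega_t)$, where $t>0$ is the parameter furnished by property (d) for this value of $\abs{r}$. Combining with the second lemma's inequality $\mathcal{N}[P_{\Omega,w/\abs{r}}(\Phi_L(\rho))]\le \mathcal{N}[P_{\Omega,w}(\rho)]$ — which we may multiply through by $1+2\mathcal{N}(\mathcal{F}\Omega_t)\ge 0$ — produces $\mathcal{N}[P_{\Omega,w}(\Phi_L(\rho))]\le \bigl(1+2\mathcal{N}(\mathcal{F}\Omega_t)\bigr)\mathcal{N}[P_{\Omega,w}(\rho)]+\mathcal{N}(\mathcal{F}\Omega_t)$.

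It remains to identify $\mathcal{N}(\mathcal{F}\Omega_t)$ with $\delta=\mathcal{N}(\mathcal{F}\Omega_{w=1})$, and this is the one genuinely new ingredient. Property (e), $\Omega_w(\beta)=\Omega_{kw}(k\beta)$, together with the Fourier scaling identity $\mathcal{F}[f(k\,\cdot)](\alpha)=k^{-2}\mathcal{F}f(\alpha/k)$, gives $\mathcal{F}\Omega_w(\alpha)=k^{-2}\mathcal{F}\Omega_{kw}(\alpha/k)$; making the change of variables $\alpha\mapsto k\alpha$ in the defining integral $\mathcal{N}(f)=\int d^2\alpha\,f^-(\alpha)$ then shows $\mathcal{N}(\mathcal{F}\Omega_w)=\mathcal{N}(\mathcal{F}\Omega_{kw})$ for every $k>0$, i.e.\ $\mathcal{N}(\mathcal{F}\Omega_w)$ is independent of $w$. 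Hence $\mathcal{N}(\mathcal{F}\Omega_t)=\delta$ for every $t>0$, and substituting this into the inequality of the previous paragraph gives exactly $(1+2\delta)\mathcal{N}[P_{\Omega,w}(\rho)]+\delta\ge \mathcal{N}\{P_{\Omega,w}[\Phi_L(\rho)]\}$.

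I do not expect a serious obstacle here, since essentially all the analytic work is already carried by the two lemmas; the only point demanding care is the bookkeeping of which scaling parameter each quantity depends on. In particular the $t$ produced by property (d) may itself depend on $w$ and on $\abs{r}$, and what makes the final bound clean and uniform is precisely the scale-invariance of $\mathcal{N}(\mathcal{F}\Omega_\bullet)$ established from property (e). It is also worth noting that nothing in the argument requires $\mathcal{F}\Omega_w$ to be pointwise positive, so the theorem applies to genuine filters whose Fourier transform takes negative values — which is the whole reason the constant $\delta$ can be strictly positive.
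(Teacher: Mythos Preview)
Your proposal is correct and follows essentially the same route as the paper: apply Lemma~2 to get $\mathcal{N}[P_{\Omega,w}(\rho)]\ge\mathcal{N}[P_{\Omega,w/\abs{r}}(\Phi_L(\rho))]$, apply Lemma~1 to the output state $\Phi_L(\rho)$ to bound $\mathcal{N}[P_{\Omega,w}(\Phi_L(\rho))]$ in terms of $\mathcal{N}[P_{\Omega,w/\abs{r}}(\Phi_L(\rho))]$ and $\mathcal{N}(\mathcal{F}\Omega_t)$, and then invoke property~(e) with Fourier scaling to show $\mathcal{N}(\mathcal{F}\Omega_t)=\mathcal{N}(\mathcal{F}\Omega_{w=1})=\delta$. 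Your explicit treatment of the boundary case $\abs{r}=1$ is a small refinement the paper omits, but otherwise the argument is the same.
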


\begin{proof}

Consider any given linear optical map $\Phi_L$. By Lemma 1, for a state $\Phi_L(\rho)$, we obtain \begin{align*}
&\mathcal{N}\{ P_{\Omega, w}[\Phi_L(\rho)]\} \\ &\leq \mathcal{N}\{P_{\Omega, w/\abs{r}}[\Phi_L(\rho)]\} [1+2 \mathcal{N}(\mathcal{F}\Omega_{t})]+\mathcal{N}(\mathcal{F}\Omega_{t})
\end{align*}
Combining the above and Lemma 2, we get the following inequalities:
\begin{align*}
\mathcal{N}[P_{\Omega, w}(\rho)] &\geq \mathcal{N}\{P_{\Omega, w/\abs{r}}[\Phi_L(\rho)]\} \\ &\geq \frac{\mathcal{N}\{P_{\Omega, w}[\Phi_L(\rho)]\}-\mathcal{N}(\mathcal{F}\Omega_t)}{1+2\mathcal \mathcal{N}(\mathcal{F}\Omega_t)}
\end{align*}
Define $\delta = \mathcal \mathcal{N}(\mathcal{F}\Omega_t)$. 

Finally, we observe that  because $\Omega_{w}(\alpha) = \Omega_{kw}(k\alpha)$ for any $k>0$, if we set $w=1$ and $k=t$, we get $\Omega_{w=1}(\alpha) = \Omega_{t}(t\alpha)$ . From the scaling property of the Fourier transform $\mathcal{F}[\Omega_{t}(t\beta)](\alpha)=\mathcal{F}\Omega_{t}(\alpha/t)/t^2$, we also have that  $\int d^2\alpha \mathcal{F}\Omega_{t}^-(\alpha/t)/t^2 = \int d^2\alpha \mathcal{F}\Omega_{t}^-(\alpha)t^2/t^2 = \mathcal{N}(\mathcal{F}\Omega_t)$. This implies that $\mathcal{N}(\mathcal{F}\Omega_{w=1}) = \mathcal{N}(\mathcal{F}\Omega_t)$ for any $t>0$, which completes the proof.
\end{proof}

Theorem~\ref{thm::filtNegMono} suggests that given a filter that satisfies properties (a)-(e), when the negativity of the Fourier transform of the filter is small, the filtered negativity $\mathcal N_{\Omega, w} (\rho)$ is approximately a linear optical monotone. Ideally, we would like the Fourier transform of the filter to be pointwise positive and still satisfy properties (a)-(e), which would imply that the filtered negativity is an exact linear optical monotone which can be computed for every $w>0$. It remains unclear whether this is possible, but we demonstrate that the negativity of the filter can at least be made arbitrarily small, such that the filtered negativity is essentially a linear optical monotone to any arbitrary level of precision.

\begin{proposition}\label{prop::filt}
Define $\Omega_{w, \epsilon}(\beta) \coloneqq \exp(-\abs{\beta/w}^{2+\epsilon})$, where $w>0$ is the width parameter, and $\epsilon>0$ is the error parameter.

Then $\Omega_{w, \epsilon}$ is a filter that satisfies properties (a)-(e). Furthermore, $\mathcal{N}(\mathcal{F}\Omega_{w=1, \epsilon}) \rightarrow 0 $ as $\epsilon \rightarrow 0$.
\end{proposition}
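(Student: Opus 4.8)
The plan has two parts: a direct check of properties (a)--(e), and a Fourier-continuity argument for the vanishing of $\mathcal{N}(\mathcal{F}\Omega_{w=1,\epsilon})$.

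For (a)--(e) I would factor $\Omega_{w,\epsilon}=\Omega^1_{w,\epsilon}\Omega^2_{w,\epsilon}$ with $\Omega^1_{w,\epsilon}=\Omega^2_{w,\epsilon}=\exp(-\tfrac12\abs{\beta/w}^{2+\epsilon})$. Since $2+\epsilon>2$, each factor decays faster than a Gaussian, hence is square integrable, giving (a); and $\abs{\Omega^1_{w,\epsilon}(\beta)\,e^{\pi^2\abs{\beta}^2/2}}^2=\exp(-\abs{\beta/w}^{2+\epsilon}+\pi^2\abs{\beta}^2)$ is still integrable because the $\abs{\beta}^{2+\epsilon}$ term dominates, giving (b) --- and this is exactly the step that forces $\epsilon>0$. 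Property (c) is immediate since $\Omega_{w,\epsilon}(0)=1$ and $\Omega_{w,\epsilon}(\beta)\to1$ pointwise as $w\to\infty$. Property (e), $\Omega_{kw,\epsilon}(k\beta)=\exp(-\abs{k\beta/kw}^{2+\epsilon})=\Omega_{w,\epsilon}(\beta)$, is an identity. For (d) I would simply solve $\Omega_{w/\abs{r},\epsilon}\Omega_{t,\epsilon}=\Omega_{w,\epsilon}$ for $t$: this amounts to $(\abs{r}/w)^{2+\epsilon}+t^{-(2+\epsilon)}=w^{-(2+\epsilon)}$, so $t=w\,(1-\abs{r}^{2+\epsilon})^{-1/(2+\epsilon)}$, a finite positive number for every $\abs{r}<1$.

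For the limit, fix $w=1$ and set $\phi_\epsilon(\beta):=\exp(-\abs{\beta}^{2+\epsilon})$, so $\phi_0(\beta)=e^{-\abs{\beta}^2}$ and, by the Gaussian transform quoted in the Preliminaries, $\mathcal{F}\phi_0(\alpha)=\pi e^{-\pi^2\abs{\alpha}^2}\ge0$. Since $\mathcal{F}\phi_0$ is pointwise nonnegative, we have the pointwise bound $(\mathcal{F}\phi_\epsilon)^-(\alpha)\le\abs{\mathcal{F}\phi_\epsilon(\alpha)-\mathcal{F}\phi_0(\alpha)}$, hence $\mathcal{N}(\mathcal{F}\phi_\epsilon)\le\lVert\mathcal{F}\phi_\epsilon-\mathcal{F}\phi_0\rVert_1$, and it suffices to show $\mathcal{F}\phi_\epsilon\to\mathcal{F}\phi_0$ in $L^1(\mathbb{R}^2)$ as $\epsilon\to0$. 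I would split this $L^1$ norm at a radius $R$. On $\{\abs{\alpha}\le R\}$ one uses $\lVert\mathcal{F}\phi_\epsilon-\mathcal{F}\phi_0\rVert_\infty\le\lVert\phi_\epsilon-\phi_0\rVert_1\to0$, where the latter convergence follows from dominated convergence with dominating function $\mathbb{1}_{\{\abs{\beta}\le1\}}+e^{-\abs{\beta}^2}$ (using $\abs{\beta}^{2+\epsilon}\ge\abs{\beta}^2$ for $\abs{\beta}\ge1$). For the tail $\{\abs{\alpha}>R\}$ one needs a decay estimate on $\mathcal{F}\phi_\epsilon$ that is uniform in $\epsilon$: from the identity $\mathcal{F}(\partial^\gamma\phi_\epsilon)(\alpha)=(\pm2\pi i)^{\abs{\gamma}}(\text{monomial in }\alpha)\,\mathcal{F}\phi_\epsilon(\alpha)$ one gets $\abs{\alpha}^{\abs{\gamma}}\abs{\mathcal{F}\phi_\epsilon(\alpha)}\lesssim\sum_{\abs{\gamma'}=\abs{\gamma}}\lVert\partial^{\gamma'}\phi_\epsilon\rVert_1$, and for $\abs{\gamma}\le3$ each $\lVert\partial^\gamma\phi_\epsilon\rVert_1$ is controlled: $\partial^\gamma\phi_\epsilon$ is a sum of terms bounded by $(\abs{\beta}^{\epsilon-1}+\abs{\beta}^{M})e^{-\abs{\beta}^{2+\epsilon}}$, whose integral, after polar coordinates and the substitution $u=\abs{\beta}^{2+\epsilon}$, is a finite combination of Gamma values $\tfrac{1}{2+\epsilon}\Gamma\!\big(\tfrac{a+1}{2+\epsilon}\big)$ with the argument staying in a compact subinterval of $(0,\infty)$ as $\epsilon$ ranges over $(0,1]$; hence these norms are bounded uniformly in $\epsilon$ and do not degenerate as $\epsilon\to0$. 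Combined with the trivial bound $\abs{\mathcal{F}\phi_\epsilon}\le\lVert\phi_\epsilon\rVert_1$, this yields $\abs{\mathcal{F}\phi_\epsilon(\alpha)}\le C/(1+\abs{\alpha}^3)$ with $C$ independent of $\epsilon\in(0,1]$, and since $(1+\abs{\alpha}^3)^{-1}$ is integrable on $\mathbb{R}^2$ the tail can be made $<\eta/2$ for all $\epsilon$ by taking $R$ large, after which taking $\epsilon$ small makes the bounded-region term $<\eta/2$.

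The routine part is (a)--(e); the genuine obstacle is the uniform-in-$\epsilon$ tail decay of $\mathcal{F}\phi_\epsilon$. The subtlety is that $\abs{\beta}^{2+\epsilon}$ is only finitely differentiable at the origin, so $\phi_\epsilon$ is in general merely $C^2$; however its third-order derivatives have at worst an $\abs{\beta}^{\epsilon-1}$ singularity at $\beta=0$, which is locally integrable in two dimensions, so $\phi_\epsilon\in W^{3,1}(\mathbb{R}^2)$ and the Fourier differentiation identity above is legitimate. Note also that $\abs{\alpha}^{-2}$ decay would be insufficient (it is not integrable at infinity in two dimensions), so the third-order derivative bound is really needed; carrying out the derivative and Gamma-function estimates while tracking that the constants stay bounded as $\epsilon\to0$ is the part demanding the most care.
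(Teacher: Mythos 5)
Your proof is correct, and for the checklist (a)--(e) it is essentially identical to the paper's: the same symmetric factorization $\Omega^1=\Omega^2=\exp(-\tfrac12\abs{\beta/w}^{2+\epsilon})$, the same explicit $t=w(1-\abs{r}^{2+\epsilon})^{-1/(2+\epsilon)}$ for (d), and the same one-line identity for (e). Where you genuinely diverge is the limit $\mathcal{N}(\mathcal{F}\Omega_{w=1,\epsilon})\rightarrow 0$: the paper disposes of it heuristically, saying only that $\exp(-\abs{\beta}^{2+\epsilon})\approx\exp(-\abs{\beta}^{2})$ as $\epsilon\rightarrow 0$ and that the Fourier transform of a Gaussian is positive, whereas you supply the quantitative argument that is actually needed, since the negativity is an $L^1$-type functional of the transform and pointwise closeness of the filters does not by itself control it. Your chain --- the pointwise bound $(\mathcal{F}\phi_\epsilon)^-\leq\abs{\mathcal{F}\phi_\epsilon-\mathcal{F}\phi_0}$ valid because $\mathcal{F}\phi_0\geq 0$, reduction to $L^1$ convergence of the transforms, sup-norm control on a ball via $\lVert\phi_\epsilon-\phi_0\rVert_1\rightarrow 0$, and the uniform-in-$\epsilon$ tail bound $\abs{\mathcal{F}\phi_\epsilon(\alpha)}\lesssim(1+\abs{\alpha}^3)^{-1}$ obtained from $W^{3,1}$ bounds on $\phi_\epsilon$ (correctly noting the $\abs{\beta}^{\epsilon-1}$ singularity of the third derivatives is locally integrable in two dimensions, and that second-order decay would not suffice) --- is sound and closes the gap the paper leaves open; the only blemishes are bookkeeping-level (the Gamma-function argument should read $\tfrac{a+2}{2+\epsilon}$ once the polar Jacobian is included, and the dominating function wants a harmless constant factor). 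In short: same construction and same verification of the filter properties, but your treatment of the $\epsilon\rightarrow 0$ claim is a rigorous replacement for the paper's approximation argument, at the cost of a page of routine but careful Fourier analysis.
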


\begin{proof}

For property (a), we simply choose $\Omega_w^1(\beta)= \exp(-\frac{1}{2}\abs{\beta/w}^{2+\epsilon})$ and $\Omega_w^2(\beta)= \exp(-\frac{1}{2}\abs{\beta/w}^{2+\epsilon})$ and note that both $\Omega_w^1(\beta)$ and $\Omega_w^2(\beta)$ are square integrable functions.

For property (b), we note that $\Omega_w^1(\beta)e^{\pi^2\abs{\beta}^2/2} = \exp(-\frac{1}{2}\abs{\beta/w}^{2+\epsilon}) \exp(\pi^2\abs{\beta}^2/2)  \approx \exp(-\frac{1}{2}\abs{\beta/w}^{2+\epsilon}) \leq \exp(-\frac{1}{2}\abs{\beta/w}^{2})$ for sufficiently large $\abs{\beta} \gg1$. The last term is just a Gaussian function, which is square integrable, so $\Omega_w^1(\beta)e^{\pi^2\abs{\beta}^2/2}$ is also square integrable.

For property (c), one can verify that $\Omega_w(0)=e^0 = 1$ and that for any given $\beta$, as $w\rightarrow \infty$, $\Omega_{w, \epsilon}(\beta) \rightarrow 1$.

For property (d), one can verify that $\Omega_{w, \epsilon}(\beta) = \Omega_{w/\abs{r}, \epsilon}(\beta)\Omega_{t, \epsilon}(\beta) $ where $t= \frac{w}{(1-\abs{r}^{q})^{1/q}}$ where $q = 2+\epsilon$.

For property (e), one can verify that $\Omega_{kw, \epsilon}(k\beta)= \exp(-\abs{k\beta/(kw)}^{2+\epsilon})= \exp(-\abs{\beta/w}^{2+\epsilon}) = \Omega_{w, \epsilon}(\beta)$.

Finally, we observe that as $\epsilon \rightarrow 0$, $\exp(-\abs{\beta/w}^{2+\epsilon}) \approx \exp(-\abs{\beta/w}^{2})$. Since a Gaussian function's Fourier transform is also Gaussian, $\mathcal{F}\Omega_{w,\epsilon}$ approaches a positive distribution so $\mathcal{N}(\Omega_{w=1, \epsilon}) \rightarrow 0 $ as $\epsilon \rightarrow 0$.
\end{proof}

\section{Examples}

Here, we provide some numerical examples that illustrates our results for the negativity $\mathcal N$, the $s$-parametrized negativity $\mathcal{N_s}$ and the filtered negativity $\mathcal N_{\Omega, w}\coloneqq \mathcal{N}(P_{\Omega,w})$ using several prominent nonclassical states. We will use the filter $\Omega_{w,\epsilon}$ from Proposition \ref{prop::filt}. The error parameter $\epsilon$ is choosen to be $\epsilon = 0.21$ such that $2\delta = 2\mathcal N(\Omega_{w=1, \epsilon}) \approx 0.05$. From Theorem~\ref{thm::filtNegMono}), this means that the resulting filtered negativity $\mathcal N_{\Omega, w}$ is a linear optical monotone up to approximately a 5 percent error. Note that this choice is arbitrary, as $\delta$ can be made as small as desired by decreasing $\epsilon$. 

For highly nonclassical states such as Fock and squeezed-vacuum states $\mathcal{N}$ is infinitely large, which can be verified numerically via Definition~\ref{def::NegP}. One example of a nonclassical state with finite $\mathcal{N}$ is the single-photon-added thermal(SPAT) state, defined by $\rho_\mathrm{SPAT} = a^\dagger e^{-\beta \hbar \omega a^\dagger a}a / \Tr(e^{-\beta \hbar \omega a^\dagger a}a a^\dagger)$. Its characteristic function is $\chi_\mathrm{SPAT}(\beta) = [1-\pi^2(1+\bar n )\abs{\beta}^2] e^{-\pi^2\abs{\beta}^2/\bar{n}}$, and the corresponding $P$-function is $P_\mathrm{SPAT}(\alpha) = \frac{1+\bar n}{\pi \bar n^3}\qty(\abs{\alpha}^2-\frac{\bar n}{1+ \bar n})e^{-\abs{\alpha}^2/\bar n}$ \cite{Kiesel2008}. Figure~\ref{fig::SPAT}, illustrates how the the filtered negativity $\mathcal N_{\Omega, w}(\rho_\mathrm{SPAT})$ approaches $\mathcal N (\rho_\mathrm{SPAT})$ as $w \rightarrow \infty$, which comes directly from Definition~ \ref{def::NegP}. From Theorem~$\ref{thm::monotonicity}$, we know that the negativity $\mathcal N (\rho_\mathrm{SPAT})$ cannot be increased via  linear optical processes.

\begin{figure}[t]
    \centering
    \includegraphics[width=\linewidth]{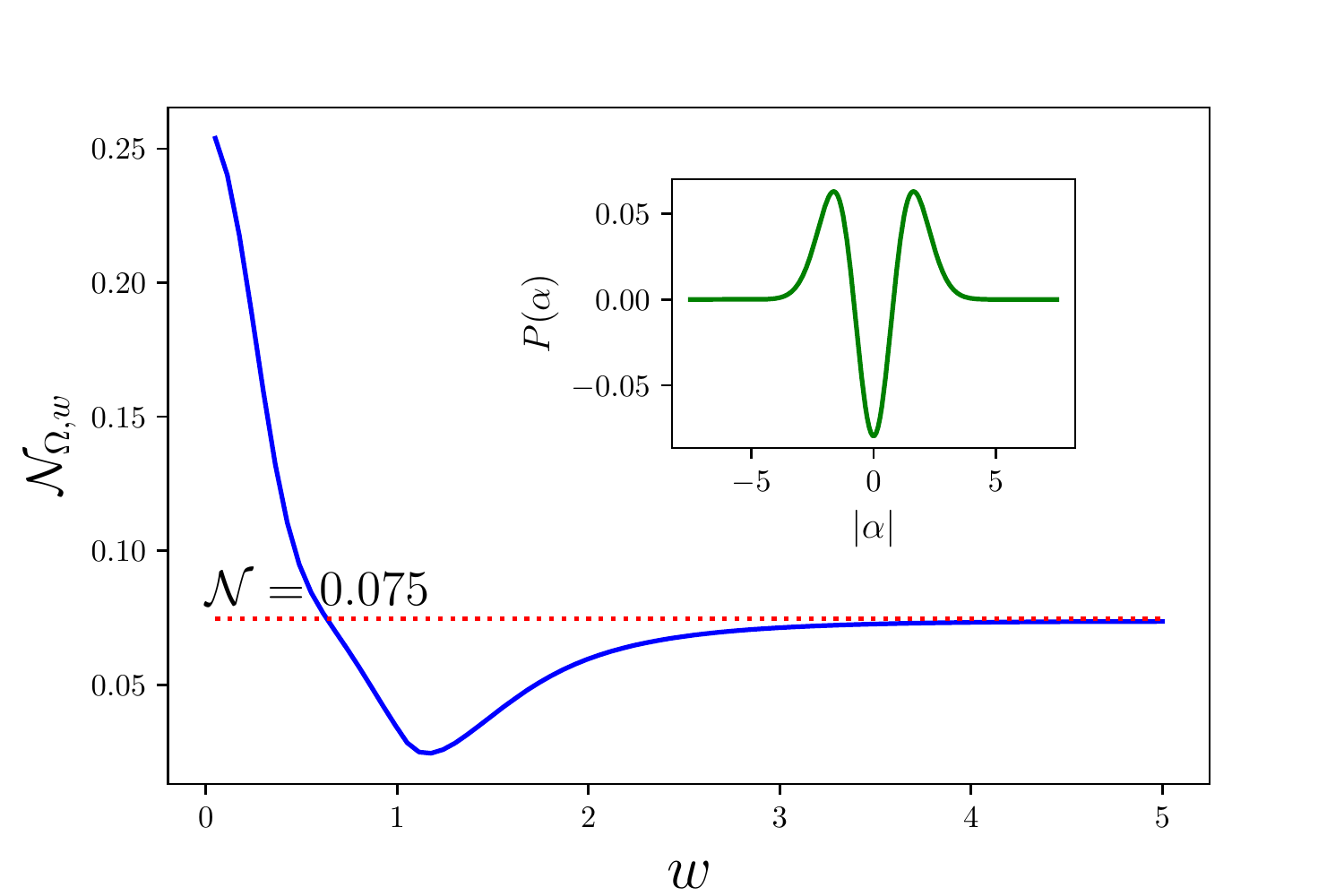}
    \caption{ Convergence of the filtered negativity(solid line) $\mathcal{N}_{\Omega,w}$ to the negativity(dotted line) $\mathcal{N}$ for the single photon added thermal state $\rho_\mathrm{SPAT}$ with $\bar n = 2$. }
    \label{fig::SPAT}
\end{figure}

From Theorem~\ref{thm::sMonotonicity} we know that the $s$-parametrized negativity $\mathcal{N}_s$ is a monotonically decreasing function of $s$. We illustrate this using Fock states $\ket{n}$. Its $s$-parametrized characteristic function is given by $\ket{n}$ is $\chi_{s}(\beta)=e^{(s-1)\pi^2\abs{\beta}/2} \mathrm{L_n}(\pi^2 \abs{\beta}^2)$, with the corresponding $s$-parametrized quasiprobabilities given by \cite{Wunsche1998} \begin{equation*}
    P_s(\alpha) = \frac{2}{\pi(1+s)}\qty(-\frac{1-s}{1+s})^n \exp(-\frac{2\abs{\beta}^2}{1+s}) \mathrm{L_n}\qty(\frac{4\abs{\beta}^2}{1-s^2}).
\end{equation*} 

Plotting $\mathcal{N}_s$, Figure \ref{fig::sNeg} illustrates its monotonic dependence on $s$ for $n =$ 1, 2 and 3. Also note how for every $s$, $\mathcal{N}_s(\ket{n})$ increases with $n$. Theorem~\ref{thm::sNegMono} says that $\mathcal{N}_s(\ket{n})$ for $s < 1$ are also valid, albeit weaker, nonclassicality measures, according to the resource theory of Refs.~\cite{Tan2017, Kwon2019}.

\begin{figure}[t]
    \centering
    \includegraphics[width=\linewidth]{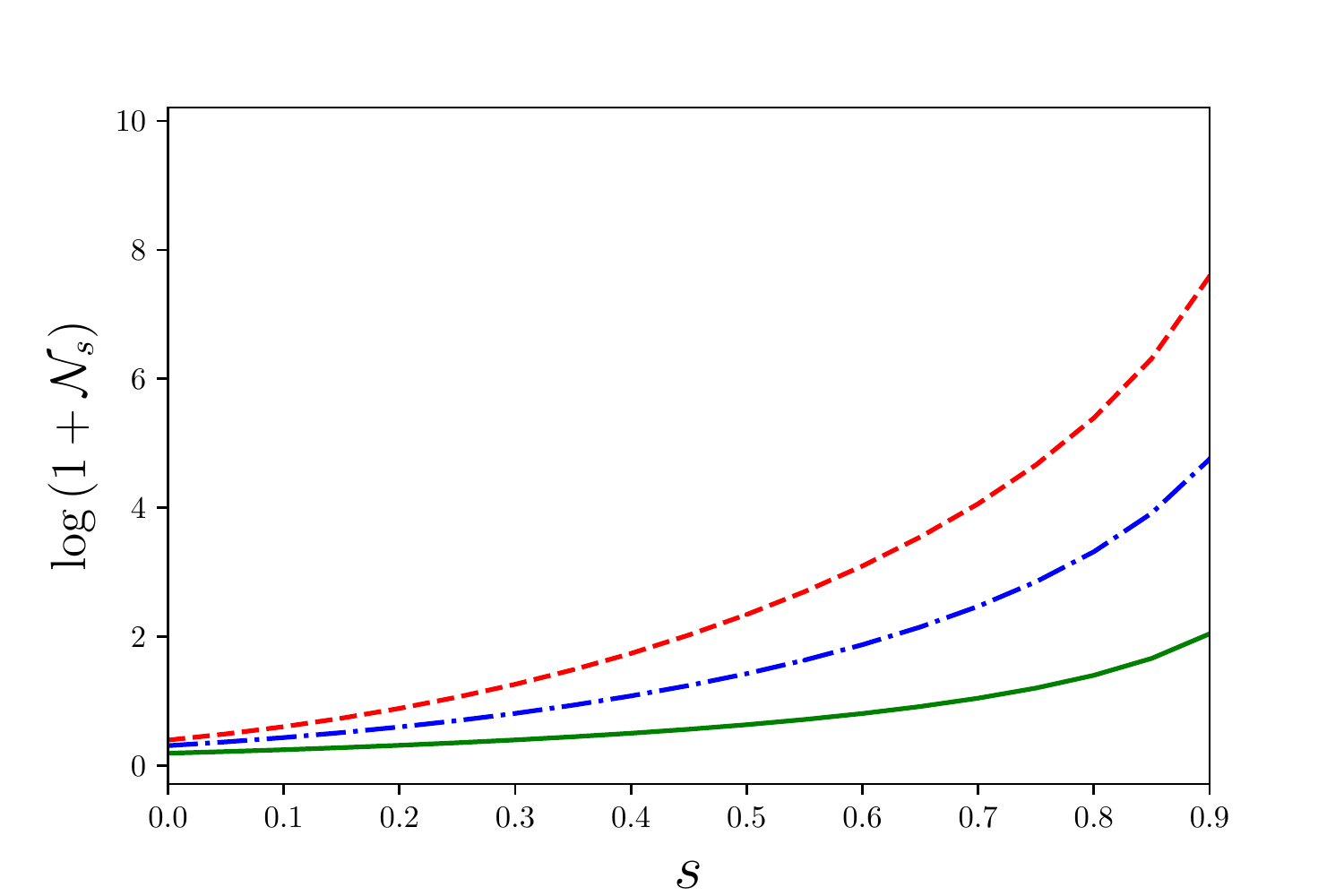}
    \caption{$s$-parametrized negativity of Fock state $\ket{n}$ for $n=$1(solid line), 2(dot-dashed line), and 3(dashed line).}
    \label{fig::sNeg}
\end{figure}

The $s$-parametrized negativities can be infinite in general. One example is the squeezed vacuum state $\ket{r} = e^{r({a^\dagger}^2-a^2)/2}\ket{0}$. Its characteristic function is $\chi_{\ket{r}}(\beta=x+iy) = \exp{\frac{\pi^2}{2}\qty[(s-e^{2r})x^2+(s-e^{-2r})y^2]}$ for $r>0$. If $s \leq e^{-2r}$, then the $s$-parametrized quasiprobability of $\ket{r}$ is Gaussian, so it does not show any negative value. However, if $s>e^{-2r}$, then its quasiprobability distribution shows extremely singular behavior, and one can numerically verify that $\mathcal{N}_s$ is infinite. In such cases, $\mathcal{N}_s$ is useful to identify the nonclassicality of the state, but is unable to capture the increase in nonclassicality that one gets from additional squeezing. This can be circumvented by considering the filtered negativity $\mathcal N_{\Omega, w}$. 

Figure~\ref{fig::fNeg} illustrates the filtered negativities $\mathcal N_{\Omega, w}$ of various squeezed states $\ket{r}$ and Fock states $\ket{n}$. We see that the filtered negativity captures the increase in nonclassicality due to both the increase in photon number $n$ and the increase in squeezing $r$. As the filter $\Omega_{w, \epsilon}$ has non-zero negativity, $\mathcal N_{\Omega, w}$ is only an approximate monotone (see Theorem~\ref{thm::filtNegMono}), but this error can be made arbitrarily small by decreasing the parameter $\epsilon$. This may, however, require increased numerical precision and hence additional computational costs.

\begin{figure}
    \centering

    \begin{minipage}{0.8\linewidth}
        \centering
        \includegraphics[width=\linewidth]{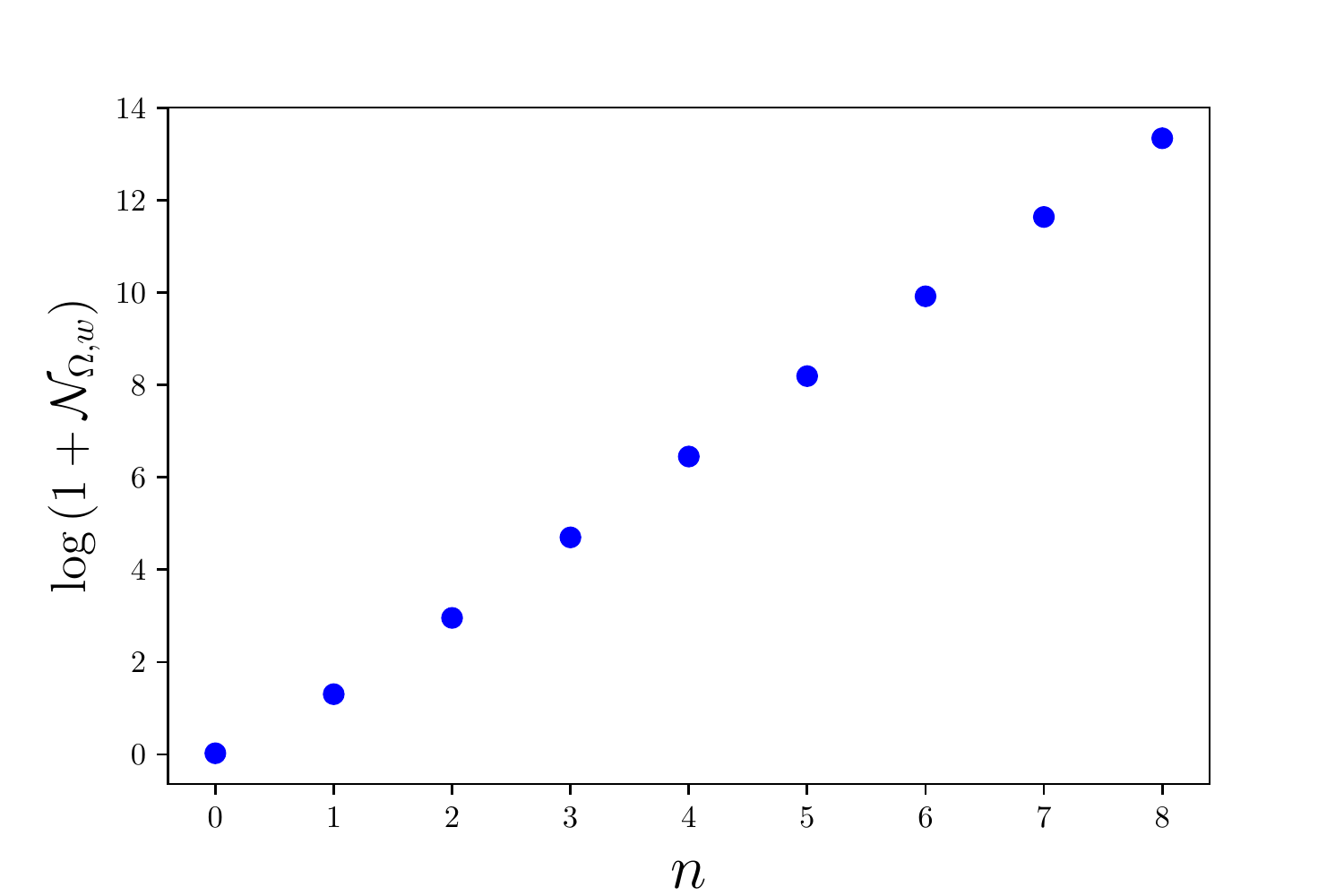} 
    \end{minipage}\hfill
    \begin{minipage}{0.8\linewidth}
        \centering
        \includegraphics[width=\linewidth]{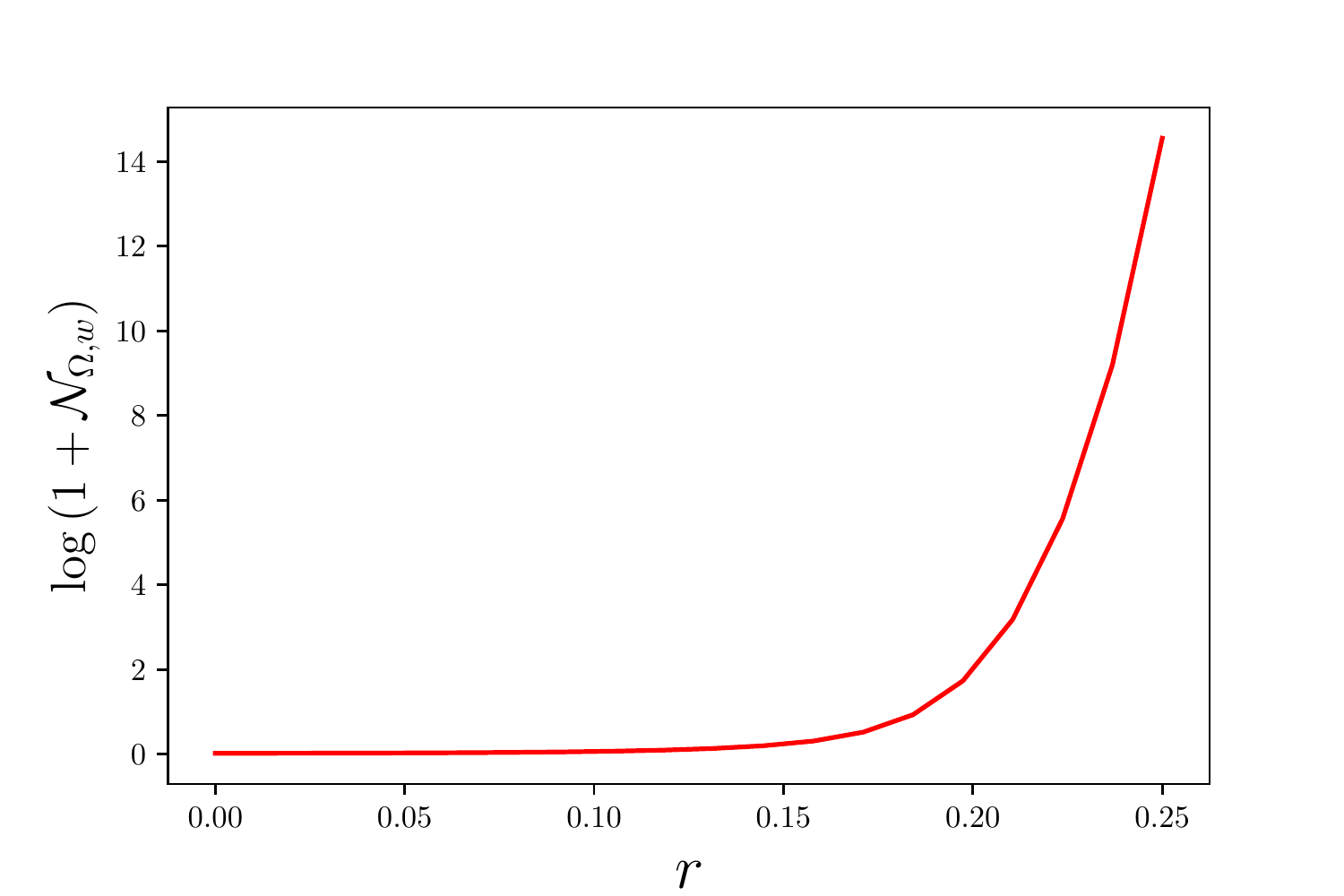} 
    \end{minipage}    
    \caption{The (logged) filtered negativity $\log(\mathcal{N}_{\Omega,w}+1)$ for Fock states $\ket{n}$ (top) and squeezed vacuum $\ket{r}$ (bottom). }
    
    \label{fig::fNeg}
\end{figure}

\section{Conclusion}

We introduced a method to unambiguously define the negativity of the $P$-function, and more generally, the negativity of the set of $s$-parametrized quasiprobabilities. Our method is based on a modified version of the filtered $P$-function in Ref.\cite{Kiesel2010}. Based on this definition, it is possible to show that negativity of the set of $s$-parametrized quasiprobabilities are all linear optical monotones, and form a continuous hierarchy of increasingly weaker nonclassicality measures that all belong to the operational resource theory of nonclassicality considered in Refs.\cite{Tan2017, Kwon2019}. 

In general, the $s$-parametrized negativities may have infinite values. In order to circumvent this, we introduce an approximate linear optical monotone that is computable and is able identify nearly every nonclassical state. A key advantage of this approach is that the set of unidentifiable nonclassical states can be made to converge to zero by increasing the single parameter $w$. The error can also be controlled via a single parameter $\epsilon$.

We also demonstrate in Theorem~\ref{thm::robustness} that the negativity of the $P$-function has a direct operational interpretation as the amount of statistical mixing with classical noise required to erase nonclassicality. Since $\mathcal{N}(\rho)$ is not always finite, this means that there are some states whose nonclassicality cannot be erased by simple statistical mixing. This is a characteristic it shares with quantum coherence, where simple mixing with an incoherent state cannot make the state classical in general\cite{Napoli2016}. One may also consider the amount of statistical mixing with nonclassical noise as a measure of nonclassicality, but at present, it is not clear how one may compute such a quantity. We leave this for future work.


Finally, we comment that our proposed measures are practical under realistic settings. In order to compute the proposed measures, one only requires the characteristic function of the quantum state, with no limitations on whether the state is mixed or pure.  The characteristic function may be sampled directly in the laboratory using only homodyne measurements\cite{Kiesel2009}. More generally, the reconstruction of any of the $s$-parametrized quasiprobabilities\cite{Lvovsky2009} allows you to infer the characteristic function, and hence compute our proposed measures.

We hope our work will spur continued interest in the study of nonclassicality in light fields.

\section{Acknowledgements}
This work was supported by the National Research Foundation of Korea (NRF) through a grant funded by the Korea government (MSIP) (Grant No. 2010-0018295). K.C. Tan was supported by Korea Research Fellowship Program through the National Research Foundation of Korea (NRF) funded by the Ministry of Science and ICT (Grant No. 2016H1D3A1938100). S. Choi was supported by NRF(National Research Foundation of Korea) Grant funded by Korean Government(NRF-2016H1A2A1908381-Global Ph.D. Fellowship Program).

\end{document}